\begin{document}
\title{The genesis of two-hump, W-shaped and M-shaped soliton propagations of the coupled  Schr$\ddot{\text{o}}$dinger-Boussinesq equations with conformable derivative}
\author{Prakash Kumar Das 
\thanks{\emph{e-mail:} prakashdas.das1@gmail.com}%
}                     
\offprints{}          
\institute{Department of Mathematics, Trivenidevi Bhalotia College, Raniganj, Paschim Bardhaman,
West Bengal, India-713347 
}
\date{Received: date / Revised version: date}
%
\abstract{
This work oversees with the coupled  Schr$\ddot{\text{o}}$dinger-Boussinesq equations with conformable derivative, which have lots of applications in laser and plasma.  The said equations are  reduced to a coupled stationary form using complex travelling wave transformation. Next  Painlev$\acute{\text{e}}$ test applied to derived the integrable cases of the reduced equation, after that using RCAM  derived the solution of reduced equations integrable and nonintegrable cases. Few theorems have been presented and proved to ensure their boundedness. All presented boundedness cases have been checked and explained by plotting the solutions for particulars values of parameters satisfying them.  The obtained solutions of stationary form utilized to derive solutions of the coupled  Schr$\ddot{\text{o}}$dinger-Boussinesq equations with conformable
derivative. The  derived solutions have been plotted and explained. From this, it appears that these solutions propagate by maintaining their two-hump, W-shaped, M-shaped solutions shapes.
\keywords{{Coupled  Schr$\ddot{\text{o}}$dinger-Boussinesq equations} \and {conformable derivative} \and {Exact solution} \and  {Boundedness} \and {Painlev$\acute{\text{e}}$ test} \and {W-shaped, M-shaped solitons} \and {Rapidly convergent approximation method}}
%
} 
\titlerunning{Two-hump soliton propagations of the coupled  SB equations with conformable derivative}
\authorrunning{P.K.Das}
\maketitle
\section{Introduction}
Diverse wave propagations observed in abundant fields of the environment are  modelled by non-linear ordinary or partial differential equations (ODEs/PDEs) involving integer or fractional order derivative.  Their chaotic features can be explained by the solutions of the equations which describes them. Due to the existence  of interaction terms in such equations, it is not always an easy task to derive an exact solution to these equations. Despite that, a vast amount of literature exists for constructing exact travelling wave and soliton solutions \cite{li2020generalized,yu2019inverse,yu2020nonstandard,olver2000applications,park2020dynamical,gao2020novel,rezazadeh2019large,savaissou2020exact} and  approximation solutions \cite{malfliet1992solitary,abbasbandy2009homotopy,xinhui2012homotopy,adomian1994solving,duan2011new,adomian1983inversion,adomian1993analytic,adomian1993new,adomian1994modified,wazwaz2000approximate,wazwaz2000modified,wazwaz2001reliable,wazwaz2001numerical,wazwaz2002numerical} of ODEs and PDEs involving integer derivative. Also there exists few direct methods \cite{khalil2014new,pandir2018analytical,ellahi2018exact,sabi2019new,houwe2020solitary} to derive exact solutions and and analytical methods  \cite{yong2008numerical,atangana2013time,biswas2019approximate,osman2019new,arqub2020numerical,kumar2019hybrid,goswami2019efficient,prakash2019numerical}  to deal with approximate solutions of   nonlinear fractional differential equations. These direct methods often need to guess the syntax of solutions and solve a system of nonlinear algebraic equations. Thus these schemes fail when the choice of a form of solution is not suitable or unable to solve the system of equations. And the above mentioned approximate methods are often observed to have slow convergent rate and always unable to provide the close form of the series solution. To overcome the above-mentioned limitations of those methods we adopt the rapidly convergent approximation method (RCAM) \cite{das2020new,das2020chirped,das2019rapidly,das2019some,das2018piecewise,das2018solutions,das2018rapidly,das2016rapidly,das2015improved}. This current work  deals with this scheme to obtained some new  solutions for the coupled  Schr$\ddot{\text{o}}$dinger-Boussinesq systems (SBS) with conformable derivative.

To attained the goal, the paper is organised as follows: in sections \ref{sec2a} and \ref{sec2} we present the basic properties of conformal derivative and methodology of RCAM to solve a system of ODEs respectively. In section  \ref{sec3} we reduced SBS to coupled ODEs by employing a complex wave transformation. The Painlev$\acute{\text{e}}$ test has been employed to the reduced coupled ODEs and identified its integrable cases in section \ref{sec4}. In section \ref{sec5}, the integrable and nonintegrable cases of the reduced ODEs have been solved by RCAM. Few theorems have been presented to study their boundedness and utilised to plot the stationary form of the solutions. These solutions used to derive explicit solutions of SBS. Furthermore, the main characteristics of these derived solutions are graphically discussed.  We summarize our outlook on the present work in section \ref{sec6}.  		
\section{Properties of conformable derivative}\label{sec2a}
This portion deals with few definitions and properties of conformable derivative \cite{khalil2014new,abdeljawad2015conformable}:
\begin{definition}The conformable  derivative of a function $f : [0,\infty ) \rightarrow \mathbb{R}$ of order $\alpha$ is defined by
\begin{eqnarray}\label{FDe1}
T_{\alpha}(h)(t)= \lim_{\epsilon \rightarrow 0} \frac{h\left( t+\epsilon\ t^{1-\alpha} \right)-h(t)}{\epsilon},
\end{eqnarray}
for all $t>0, \ \alpha \in (0,1).$ If f is $\alpha$-differentiable in some $(0, a), a > 0,$ and $ \lim_{t \rightarrow 0+} f^{(\alpha)}(t)$ exists, then define
 $f^{(\alpha)}(0)=\lim_{t \rightarrow 0+} f^{(\alpha)}(t).$
\end{definition}
 In the corresponding sections of this paper we replace  the notation  $T_{\alpha}(f)(t)$ by  $f^{(\alpha)}(t)$, to prevail the conformable  derivatives of $f$ of order $\alpha$. Some notable features of conformable derivative are listed  below:\\
If  $\alpha \in (0,1],$ and $f, g$ be $\alpha$-differentiable at a point $t>0$ then we have
\begin{itemize}
\item[1.]   $T_{\alpha} \left( a \ f+b\ g\right)=a\ T_{\alpha}(f)+b\ T_{\alpha}(g),$ for all $a,b \in \mathbb{R}.$ \\
\item[2.]  $T_{\alpha}\left( t^p \ \right)=p t^{p-\alpha}$ for all $p \in \mathbb{R}.$\\
\item[3.]  $T_{\alpha}(\lambda)=0$, for all constant functions $f(t)=\lambda.$\\
\item[4.]  $T_{\alpha}\left(f g \right)=f\ T_{\alpha}(g)+g\ T_{\alpha}(f).$ \\
\item[5.]  $T_{\alpha}\left(\frac{f}{g}\right)=\frac{g\ T_{\alpha}(f)+f\ T_{\alpha}(g)}{g^2}$.\\
\item[6.] if $f$ is differentiable, then  $T_{\alpha}(f)(t)=t^{1-\alpha} \frac{df}{dt}(t).$
\end{itemize}
\begin{theorem}[Chain Rule \cite{abdeljawad2015conformable,eslami2016first,eslami2016exact,chen2018simplest}] Let $f,g : (0 , \infty)\rightarrow \mathbb{R}$ be two  differentiable  functions and also $f$ is $\alpha$-differentiable, then, one has the following rule:\\
 $T_{\alpha}( fog ) ( t ) = t^{ 1-\alpha} g'( t ) f'( g ( t ) ).$
 \end{theorem}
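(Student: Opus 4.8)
The plan is to compute $T_{\alpha}(f\circ g)(t)$ directly from the limit definition \eqref{FDe1} and reduce it to the ordinary derivative of $g$ at $t$ together with the ordinary derivative of $f$ at $g(t)$. Writing $h=f\circ g$, one has
\[
T_{\alpha}(f\circ g)(t)=\lim_{\epsilon\to0}\frac{f\bigl(g(t+\epsilon\,t^{1-\alpha})\bigr)-f\bigl(g(t)\bigr)}{\epsilon}.
\]
First I would introduce the increment $\Delta_{\epsilon}:=g(t+\epsilon\,t^{1-\alpha})-g(t)$ and observe, using property 6 (which applies since $g$ is differentiable), that $\Delta_{\epsilon}\to0$ and $\Delta_{\epsilon}/\epsilon\to t^{1-\alpha}g'(t)=T_{\alpha}(g)(t)$ as $\epsilon\to0$.

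The naive continuation would be to factor the difference quotient as $\dfrac{f(g(t)+\Delta_{\epsilon})-f(g(t))}{\Delta_{\epsilon}}\cdot\dfrac{\Delta_{\epsilon}}{\epsilon}$ and pass to the limit in each factor. The main obstacle — the familiar subtlety in any proof of a chain rule — is that $\Delta_{\epsilon}$ may vanish along a sequence $\epsilon\to0$, so this factorization is not legitimate as written. To circumvent it I would use the auxiliary function
\[
\phi(s)=\begin{cases}\dfrac{f\bigl(g(t)+s\bigr)-f\bigl(g(t)\bigr)}{s},& s\neq0,\\[2mm] f'\bigl(g(t)\bigr),& s=0,\end{cases}
\]
which is continuous at $s=0$ exactly because $f$ is differentiable at $g(t)$, and which satisfies the identity $f(g(t)+s)-f(g(t))=\phi(s)\,s$ for every $s$, including $s=0$. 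Substituting $s=\Delta_{\epsilon}$ then rewrites the difference quotient as $\phi(\Delta_{\epsilon})\cdot\dfrac{\Delta_{\epsilon}}{\epsilon}$ with no division issue.

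Finally I would let $\epsilon\to0$: since $\Delta_{\epsilon}\to0$, continuity of $\phi$ gives $\phi(\Delta_{\epsilon})\to f'(g(t))$, while $\Delta_{\epsilon}/\epsilon\to t^{1-\alpha}g'(t)$ from the first step, so the product of limits yields $T_{\alpha}(f\circ g)(t)=t^{1-\alpha}g'(t)f'(g(t))$, as claimed. I would also record the one-line alternative: since $f$ and $g$ are both differentiable, $f\circ g$ is differentiable, and applying property 6 to $f\circ g$ together with the classical chain rule $\frac{d}{dt}f(g(t))=g'(t)f'(g(t))$ gives the statement at once; the limit-based argument above is the self-contained version that does not presuppose property 6 already known for the composite.
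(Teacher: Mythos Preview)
Your argument is correct: the auxiliary-function device with $\phi$ cleanly handles the possible vanishing of $\Delta_{\epsilon}$, and the one-line alternative via property~6 together with the classical chain rule is equally valid (indeed simpler, since both $f$ and $g$ are assumed differentiable in the ordinary sense).

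There is nothing to compare against, however: the paper does not supply a proof of this theorem. It is stated in the preliminaries with citations to \cite{abdeljawad2015conformable,eslami2016first,eslami2016exact,chen2018simplest} and used as background for the travelling-wave reduction; no argument is given in the text. Your proof therefore goes beyond what the paper offers.
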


\section{Basic methodology of RCAM} \label{sec2}
Take into account a system of ODEs 
\begin {eqnarray}\label{eq2p1}
 {\cal X}^{''}-{\cal A}^{2}.\; {\cal X}={\cal N},
\end{eqnarray}
where   ${\cal X }, {\cal A }$ and ${\cal N}$  are the matrix of   dependent variables, constant coefficients and interaction terms  respectively, having form 
 $${\cal X}=\left[\begin{array}{c}U_1(x)\\U_2(x) \\ \vdots \\U_k(x) \\\end{array}\right] , \ \ {\cal A}=\left[\begin{array}{c}
 \lambda_1\ \ 0\ \ \cdots 0 \\ 0\ \ \lambda_2 \ \ \cdots 0 \\ \vdots \\ 0\ \ 0\ \ \cdots \ \ \lambda_k  \\\end{array}\right], \ \text{and}  \ {\cal N}=\left[\begin{array}{c}
 N_1\left(U_1(x),\cdots,U_k(x)\right) \\ N_2\left(U_1(x),\cdots,U_k(x)\right) \\ \vdots \\N_k\left(U_1(x),\cdots,U_k(x)\right) \\\end{array}\right].$$ 
To solve  (\ref{eq2p1}), we remodel it  in an exponential matrix operator conformation
\begin {equation}\label{eq2p2}
{\cal O}[{\cal X}](x) = {\cal N},
\end{equation}
where linear exponential matrix operator can be recast in the form
\begin {equation}\label{eq2p3}
\hat{{\cal O}}[\cdot](x)  = e^{ {\cal A}.x}\frac{d}{dx}\left(e^{- 2 {\cal A}.x}\frac{d}{dx}\left(e^{ {\cal A}.x}[\cdot]\right)\right).
\end{equation}
The inverse operator $\hat{{\cal O}^{-1}}$ of  ${\cal O}[](x)$ is  conferred by
\begin{eqnarray}\label{eq2p4}
\hat{{\cal O}}^{-1}[\cdot](x)  =  e^{- {\cal A}.x}\int^{x} e^{2 {\cal A}.x^{'}}\int^{x^{'}} e^{- {\cal A}.x^{''}}[\cdot] dx^{''}  dx^{'}.
\end{eqnarray}
Operating ${\cal O}^{-1}$ on ${\cal O}[ {\cal X}](x)$ and employing integration by parts yields
\begin{eqnarray}\label{eq2p5}
\hat{{\cal O}}^{-1}\left( {\cal X}^{''}- {\cal A}^{2}.\;  {\cal X} \right)  = {\cal X}- {\cal C}. e^{ {\cal A}. x}- {\cal D}.e^{- {\cal A}. x},
\end{eqnarray}
where ${\cal C}=\left[\begin{array}{c}c_1 \\c_2 \\ \vdots \\c_k \\\end{array}\right]$ and ${\cal D}=\left[\begin{array}{c}
 d_1 \\  d_2 \\ \vdots  \\d_k \\\end{array}\right]$  are matrices of integration constants.
Operating ${\cal O}^{-1}$  on (\ref{eq2p2}) and utilising (\ref{eq2p5}), provides
\begin{equation}\label{eq2p6}
{\cal X} ={\cal C}.e^{{\cal A}. x} +{\cal D}.e^{-{\cal A}. x}+\hat{\cal O}^{-1}[{\cal N}](x),
\end{equation}
where ${\cal C}$, and  ${\cal D}$ are three arbitrary constants matrices. To derive the unknown ${\cal X}$ terms in the R.H.S of (\ref{eq2p6}), we recast them in the syntax
\begin{eqnarray}\label{eq2p7}
{\cal X} & \cong &\left[\begin{array}{c}U_1(x) \\ U_2(x) \\ \vdots  \\U_k(x) \\\end{array}\right]= \sum_{m=0}^\infty \left[\begin{array}{c}U_{1,m}(x) \\ U_{2,m}(x) \\ \vdots  \\U_{k,m}(x) \\\end{array}\right]
\end{eqnarray}
and
${\cal N}=\sum_{m=0}^\infty \Delta_m (x),$
with
\begin {eqnarray}\label{eq2p8}
 \Delta_m \cong \left[\begin{array}{c}\Delta_{1,m}(x) \\ \Delta_{2,m}(x) \\ \vdots  \\\Delta_{k,m}(x) \\ \end{array}\right] = \frac{1}{m!}\left[\frac{d^m}{d\epsilon^m}\left[\begin{array}{c}N_{1}(x) \\N_{2}(x) \\ \vdots \\N_{k}(x) \\\end{array}\right]\right]_{\epsilon=0}
\end{eqnarray}
and $N_{j}(x), \ j=1,2, \cdots,k$ are given by  $$N_{j}(x)=\left(\sum_{k=0}^\infty U_{1,k}\epsilon^k,\sum_{k=0}^\infty U_{2,k}\epsilon^k, \cdots  ,\sum_{k=0}^\infty U_{k,k}\epsilon^k \right).$$ The terms $\Delta_{i,m} (x)= \Delta_{i,m} (U_{1,0}(x),U_{1,1}(x), …..,U_{1,m}(x), \cdots, U_{k,0}(x),U_{k,1}(x),$ $…..,U_{k,m}(x)),$ $ i=1,2,\cdots ,k$ are  Adomian polynomials \cite{adomian1994solving,duan2011new,adomian1983inversion,adomian1993analytic,adomian1993new,adomian1994modified} outturn from the formula (\ref{eq2p8}).
Use of (\ref{eq2p8}) in (\ref{eq2p9}) provides
\begin {eqnarray}\label{eq2p11}
{\cal X} =  \left[ \begin{array}{c} c_{1}\; e^{\lambda_1 x}+ d_{1} \; e^{-\lambda_1 x} \\ c_{2}\; e^{\lambda_2 x}+ d_{2} \; e^{-\lambda_2 x} \\ \vdots \\ c_{k}\; e^{\lambda_k x}+ d_{k} \; e^{-\lambda_k x}\\\end{array}\right] + \hat{\cal O}^{-1}\left[\sum_{m=0}^\infty  \left[\begin{array}{c}\Delta_{1,m}(x) \\ \Delta_{2,m}(x) \\ \vdots  \\\Delta_{k,m}(x) \\ \end{array}\right] \right]. \nonumber \\
\end{eqnarray}
We obey the footsteps of \cite{das2019rapidly}, to get the higher order iteration terms as
\begin {eqnarray}\label{eq2p9}
&{\cal X}_0 \cong \left[\begin{array}{c}U_{1,0}(x)\\U_{2,0}(x)\\ \vdots \\U_{k,0}(x) \\\end{array}\right]= \left[\begin{array}{c} c_{1}\; e^{\lambda_1 x}+ d_{1} \; e^{-\lambda_1 x}\\ c_{2}\; e^{\lambda_2 x}+ d_{2} \; e^{-\lambda_2 x} \\ \vdots \\ c_{k}\; e^{\lambda_k x}+ d_{k} \; e^{-\lambda_k x} \\\end{array}\right],
\end{eqnarray}
\begin {eqnarray}\label{eq2p10}
&  {\cal X}_{n+1} \cong \left[\begin{array}{c}U_{1,n+1}(x)\\ U_{2,n+1}(x)\\  \vdots \\U_{k,n+1}(x) \\\end{array}\right]= \hat{\cal O}^{-1}\left[\begin{array}{c}\Delta_{1,n}(x) \\ \Delta_{2,n}(x) \\ \vdots  \\\Delta_{k,n}(x) \\ \end{array}\right],
\end{eqnarray}
$n \geq 0$. In case $ \lambda_{i} > 0 $, treatment of the vanishing boundary condition $U_{i}(\infty) = 0 $  in (\ref{eq2p11}) for  the localized solution leads us to $c_{i} = 0,\ i=1,2, \cdots, k $. Therefore the leading and higher order iteration terms  of the series solution are produced by (\ref{eq2p10}) and
\begin {eqnarray}\label{eq2p14}
&{\cal X}_0 \cong \left[\begin{array}{c}U_{1,0}(x)\\U_{2,0}(x)\\ \vdots \\U_{k,0}(x) \\\end{array}\right]= \left[\begin{array}{c}  d_{1} \; e^{-\lambda_1 x} \\  d_{2} \; e^{-\lambda_2 x} \\ \vdots \\  d_{k} \; e^{-\lambda_k x} \\\end{array}\right].
\end{eqnarray}
Further to obtain the localized solution in case $ \lambda_{i} < 0 $, for the boundary condition $U_{i}(-\infty) = 0, $   we go ahead with restraining the term involving $e^{\lambda_i x}$. In this case, the commanding and subsequent  terms  of the solution are yields by (\ref{eq2p10}) with
\begin {eqnarray}\label{eq2p15}
&{\cal X}_0 \cong \left[\begin{array}{c}U_{1,0}(x)\\U_{2,0}(x)\\ \vdots \\U_{k,0}(x) \\\end{array}\right]= \left[\begin{array}{c} c_{1}\; e^{\lambda_1 x} \\ c_{2}\; e^{\lambda_2 x}\\ \vdots \\ c_{k}\; e^{\lambda_k x} \\\end{array}\right].
\end{eqnarray}
One can easily obtain the iterative terms and the general term of the series  (or generating function) by taking advantage of symbolic software. That leads to the exact solution of the discussed system of ODEs.  
\section{Solution of SBS equations}\label{sec3}
Consider the generalized Schr$\ddot{o}$dinger-Boussinesq system (SBS) \cite{liao2020two,baleanu2019investigation,chowdhury1998painleve} in the form
\begin{eqnarray}\label{SBs0}
&& i \left( \frac{\partial E}{\partial t}+\delta_1 \  \frac{\partial E}{\partial x}\right)+\delta_2 \  \frac{\partial^2 E}{\partial x^2}= \delta_3 \ N\ E,  \nonumber \\
&& \frac{\partial^2 N}{\partial t^2}+\mu_1\ \frac{\partial^2 N}{\partial x^2}+\mu_2\ \frac{\partial^4 N}{\partial x^4}+\mu_3\ \frac{\partial^2 N^2}{\partial x^2}=\mu_4\ \frac{\partial^2 |E|^2}{\partial x^2},
\end{eqnarray}
where $E(x,t)$ is complex wave field, $N(x,t)$ is real wave field,  $\delta_i, \ i=1,2,3$ and $\mu_j, \ j=1,2,3,4$ are arbitrary (real) parameters. The SBS studied in stationary propagation of coupled nonlinear magnetosonic waves and upper-hybrid in amagnetized plasmas \cite{rao1989exact,baleanu2019investigation}, in the field of laser and plasma it describes the interaction of long waves with short wave packets in nonlinear dispersive media \cite{makhankov1974stationary}, it plays important role in diatomic lattice system \cite{yajima1979soliton}, and  Langmuir soliton formation \cite{rao1996coupled,rao1997coupled}. Several researchers applied Lie point symmetry \cite{rao1996coupled}, Painlev$\acute{\text{e}}$ Analysis and Backlund transformations \cite{chowdhury1998painleve} finite difference schemes \cite{liao2020two},  simplest equation method \cite{neirameh2015topological}, extended trial equation method \cite{gepreel2016extended}, and  direct algebraic method \cite{eslami2015soliton,hon2009series,fan2003algebraic} to study its  properties and solutions. 

In this work, we consider the  coupled  Schrödinger-Boussinesq  systems (SBS) with conformable derivative
\begin{eqnarray}\label{SBs}
&& i \left( E^{(\beta)}_t+\delta_1 \  E^{(\alpha)}_x \right)+\delta_2 \  E^{(\alpha)}_{xx}= \delta_3 \ N\ E,  \nonumber \\
&& N^{(\beta)}_{tt}+\mu_1\ N^{(\alpha)}_{xx}+\mu_2\ N^{(\alpha)}_{xxxx}+ \mu_3\ \left( N^2 \right)^{(\alpha)}_{xx}= \mu_4\ \left( |E|^2 \right)^{(\alpha)}_{xx},
\end{eqnarray}
where $f^{(\alpha)}_{*}$ and $f^{(\beta)}_{*}$ represents $\alpha$ and $\beta$ order conformal  derivative of $f$ with respect to suffix variables respectively. 
Here our aim is to study its integrability and derive its new stationary different shaped exact solutions by employing RCAM. 
To attain the target we impose the travelling wave transformation  \cite{inc2018dark,eslami2016exact,chen2018simplest}
\begin{eqnarray}\label{SBsTr}
&& E(x, t)=u(\xi)\ e^{i \eta},  \ \ \eta=k_1\ \frac{x^{\alpha}}{\alpha}+k_2\ \frac{t^{\beta}}{\beta}+c_0, \nonumber \\ 
&& N(x,t)=v(\xi),\ \ \xi=k_3 \ \frac{x^{\alpha}}{\alpha}+c\ \frac{t^{\beta}}{\beta}+\xi_0,
\end{eqnarray}
 to (\ref{SBs}) and equating real and imaginary parts we get
\begin{eqnarray}\label{reimpp}
&& c+k_3\ \delta_1+2 k_1 \ k_3 \ \delta_2=0, \nonumber \\
&& k_3^2 \ \delta_2\ u''-(k_2+k_1 \delta_1+k_1^2\ \delta_2) u-\delta_3\ u\ v=0, \\
&& k_3^4 \  \mu_2\ v^{(4)}+(c^2+\mu_1 \ k_3^2) v''-2 k_3^2 \ \mu_4 (uu')'+2 k_3^2 \ \mu_3\ (v v')'=0. \nonumber 
\end{eqnarray}
Integrating the last equation of the system twice and taking integration constant zero,  reduces system (\ref{reimpp}) to 

\begin{eqnarray}\label{meq1}
&&u''(\xi )-\lambda_1^2 \ u(\xi )=\alpha_1    \ u(\xi )\ v(\xi ), \nonumber \\
&&v''(\xi )-\lambda_2^2 \ v(\xi )=\beta_1    \ u(\xi )^2+  \gamma_1    \  v(\xi )^2,
\end{eqnarray}
 for the values of the parameters
\begin{eqnarray}\label{reimp}
&& c=- k_3 ( \delta_1+2 k_1 \ \delta_2) ,\ \alpha_1    =\frac{\delta_3}{\delta_2 \ k_3^2},\ \beta_1   =\frac{\mu_4}{\mu_2 \ k_3^2} ,\ \gamma_1    =-\frac{\mu_3}{\mu_2 \ k_3^2}, \nonumber \\
&& \lambda_1^2=\frac{k_1 \left(\delta_1+\delta_2 k_1\right)+k_2}{\delta _2 \ k_3^2}, \ \  \lambda_2^2=-\frac{\delta_1^2+4 \delta_1 \delta_2  k_1+4  k_1^2 \delta_2^2+\mu_1}{\mu_2 \ k_3^2}. 
\end{eqnarray}
Here $u(\xi)$ and $v(\xi)$ are real fields, $\xi$ is the (real) independent variable and all the other
remaining quantities are free parameters. It is important to note here that the equations are invariant under the transformations (i) $\xi \rightarrow -\xi$
 (ii)$u \rightarrow -u$, and (iii) $\xi \rightarrow \xi+C$, where $C$ is a constant.
 
\section{ The Painlev$\acute{\text{e}}$ test of Eq.(\ref{meq1})}\label{sec4}
The existence of solutions is a necessary condition of integrability, but it is not sufficient.  To confirm the integrability other tests such
as the Lax pair or the Painlev$\acute{\text{e}}$ test should be used for the proposed model. The Painlev$\acute{\text{e}}$ analysis is a powerful scheme to check the integrability of a system. Here we apply this  important tools Mathematica package PainleveTest.m \cite{baldwin2006symbolic} to examine the integrability of equation (\ref{meq1}).
Application of the package yields  the resonances  of the considered equation as
 $$-1;\ 6;\  -\frac{\sqrt{-\alpha_1     (23 \alpha_1    -48 \gamma_1     )}-5 \alpha_1    }{2 \alpha_1    };\ \frac{\sqrt{-\alpha_1     (23 \alpha_1    -48 \gamma_1     )}+5 \alpha_1    }{2 \alpha_1    }.$$
 It can be checked that for  resonances $-1; 6$,
this model (\ref{meq1}) fails the Painlev$\acute{\text{e}}$ test, and the remaining two resonances are symbolic so we can not proceed further. To move forward we assume that these symbolic resonances are equal to positive integer  $k$ (say), which yields
 $$ \pm \frac{\sqrt{-\alpha_1     (23 \alpha_1    -48 \gamma_1     )}\pm 5 \alpha_1    }{2 \alpha_1    }=k \ \ \text{or}\ \ \gamma_1    =\frac{1}{12} \left(12  +  k^2-5  k\right) \alpha_1   .$$
Subsequent, setting different positive integer values for $k$ and  using the same Mathematica package, we get the following two integrable cases;\\
1. For $k=8$ compatibility condition is $\gamma_1    =3 \alpha_1   .$\\
2. For $k=2 \ \text{or}\ 3$ compatibility condition is $\gamma_1    =\frac{\alpha_1   }{2} , \ \ \ \lambda_1=\lambda_2.$
\section{Solution of SBS to some special integrable and nontegrable cases by RCAM}\label{sec5}
 In this section we solve  above presented  integrable and one nonintegrable cases  of (\ref{meq1}) (or (\ref{SBs})) by RCAM.
\subsection{Case-I \  $\gamma_1    =3 \alpha_1   $}
 In this case, applying RCAM we get following correction terms
 \begin{eqnarray*}
&& \begin{cases}
  u_0(\xi )=u_- \ e^{\lambda _1 \xi },\\
  v_0(\xi )=v_- \ e^{\lambda _2 \xi }
 \end{cases} \\
&&  \begin{cases}
  u_1(\xi )=\frac{\alpha_1     u_- v_- e^{\left(\lambda _1+\lambda _2\right) \xi }}{\lambda _2 \left(2 \lambda _1+\lambda
   _2\right)},\\
  v_1(\xi )=\frac{\beta_1     u_-^2 e^{2 \lambda _1 \xi }}{4 \lambda _1^2-\lambda _2^2}+\frac{\alpha_1     v_-^2 e^{2 \lambda _2 \xi
   }}{\lambda _2^2}
 \end{cases} \\
 && \begin{cases}
  u_2(\xi )=\frac{\alpha_1     u_- e^{\lambda _1 \xi } \left(\beta_1     \lambda _2^3 u_-^2 e^{2 \lambda _1 \xi }-4 \alpha_1     \lambda
   _1^2 \left(\lambda _2-2 \lambda _1\right) v_-^2 e^{2 \lambda _2 \xi }\right)}{32 \lambda _1^4 \lambda
   _2^3-8 \lambda _1^2 \lambda _2^5},\\
  v_2(\xi )=\frac{\alpha_1     v_- e^{\lambda _2 \xi } \left(3 \alpha_1     \lambda _1 \left(\lambda _2^2-4 \lambda _1^2\right)
   v_-^2 e^{2 \lambda _2 \xi }-4 \beta_1     \lambda _2^3 u_-^2 e^{2 \lambda _1 \xi }\right)}{4 \lambda _1 \lambda
   _2^4 \left(\lambda _2^2-4 \lambda _1^2\right)},
 \end{cases}
    \end{eqnarray*}
  \begin{align*}
  \begin{cases}
  u_3(\xi )=&\frac{\alpha_1    ^2 u_- v_- e^{\left(\lambda _1+\lambda _2\right) \xi } \left(\beta_1     \lambda _2^4 \left(6 \lambda
   _1+\lambda _2\right) u_-^2 e^{2 \lambda _1 \xi }-2 \alpha_1     \lambda _1^2 \left(\lambda _2-2 \lambda
   _1\right) \left(2 \lambda _1+\lambda _2\right)^2 v_-^2 e^{2 \lambda _2 \xi }\right)}{8 \lambda _1^2
   \left(2 \lambda _1-\lambda _2\right) \lambda _2^5 \left(2 \lambda _1+\lambda _2\right)^3},\\
  v_3(\xi )=&\alpha_1     \left[\beta_1    ^2 \lambda _2^6 u_-^4 \left(2 \lambda _1+\lambda _2\right) e^{4 \lambda _1 \xi }-8 \alpha_1     \beta_1     \lambda _1
   \left(\lambda _2-2 \lambda _1\right) \left(\lambda _1+\lambda _2\right)^2 \lambda _2^3 u_-^2 \right. \\
 &  \left. \times v_-^2 e^{2
   \left(\lambda _1+\lambda _2\right) \xi }+2 \alpha_1^2 \lambda _1^2 \left(2 \lambda _1+\lambda _2\right)\left(\lambda
   _2^2-4 \lambda _1^2\right)^2 v_-^4 e^{4 \lambda _2 \xi }\right]  \\
 &  \text{/} \left[4 \lambda _1^2 \lambda _2^6 \left(2 \lambda _1+\lambda _2\right)
   \left(\lambda _2^2-4 \lambda _1^2\right)^2\right]
 \end{cases}
 \end{align*}
   \begin{align*}
  \vdots \nonumber
 \end{align*}
therein  $u_-$ and $v_-$ are integration constants. Likewise, other higher order correction terms can be calculated using symbolic computations available in software packages Mathematica and Maple. Moreover, after calculating ten or higher order iteration terms the said software packages can easily provide the generating functions (or general term) of the iteration terms. Hither we have acquired the subsequent generating functions
 \begin{align}\label{msol2}
 u(\xi, \epsilon)=&\left\{8 \lambda_1^2 \left(2 \lambda_1-\lambda_2\right) \left(2 \lambda_1+\lambda _2\right)^2 u_- e^{\lambda_1
   \xi } \left(2 \left(2 \lambda_1+\lambda_2\right) \lambda _2^2+\alpha_1     \left(\lambda _2-2 \lambda
   _1\right) \right. \right. \nonumber \\
  &  \left. \left. \times  v_- \epsilon  e^{\lambda _2 \xi }\right)\right\}  \text{/}Q(\xi,\epsilon), \nonumber \\
 v(\xi, \epsilon)=&4 \left(2 \lambda _1-\lambda _2\right) \left(2 \lambda _1+\lambda _2\right)^2 \epsilon  \left\{ \alpha_1    ^2
   \beta_1    ^2 \left(2 \lambda _1-\lambda _2\right) \lambda _2^4 u_-^4 v_- \epsilon ^4 e^{\left(4 \lambda
   _1+\lambda _2\right) \xi } \right. \nonumber \\
   &  \left. +16 \beta_1     \lambda _1^2  u_-^2 \epsilon  \left(2 \lambda _1+\lambda _2\right) e^{2
   \lambda _1 \xi } \left(4 \lambda _1^2 \left(2 \lambda _1+\lambda _2\right){}^2 \lambda _2^4+\alpha_1    ^2
   \lambda _1^2 v_-^2 \epsilon ^2  \right. \right. \nonumber \\
  &  \left. \left. \times \left(\lambda _2-2 \lambda _1\right)^2 e^{2 \lambda _2 \xi } -\alpha_1    
   \left(\lambda _2^3-4 \lambda _1^2 \lambda _2\right)^2 v_- \epsilon  e^{\lambda _2 \xi }\right)  -64 \lambda
   _1^4 \left(\lambda _2-2 \lambda _1\right) \right. \nonumber \\
  &  \left. \times \left(2 \lambda _1+\lambda _2\right)^4 \lambda _2^4 v_-
   e^{\lambda _2 \xi }\right\} \text{/} Q(\xi,\epsilon)^2
 \end{align}
 where
  \begin{align}\label{mgfun}
Q(\xi,\epsilon)=& \alpha_1     \beta_1     u_-^2 \epsilon ^2 e^{2 \lambda _1 \xi } \left\{ \alpha_1     \left(\lambda _2-2 \lambda _1\right)^2
   v_- \epsilon  e^{\lambda _2 \xi }-2 \lambda _2^2 \left(2 \lambda _1+\lambda _2\right)^2\right\}\nonumber \\
   & +8 \lambda
   _1^2 \left(2 \lambda _1-\lambda _2\right)  \left(2 \lambda _1+\lambda _2\right)^3  \left(2 \lambda_2^2-\alpha_1     v_- \epsilon  e^{\lambda _2 \xi }\right).
  \end{align}
 Ergo, the exact solution can be derived by setting $\epsilon=1$ in  (\ref{msol2})-(\ref{mgfun}) as
 \begin{align}\label{msol3}
 u(\xi)=&\left\{8 \lambda_1^2 \left(2 \lambda_1-\lambda_2\right) \left(2 \lambda_1+\lambda _2\right)^2 u_- e^{\lambda_1
   \xi } \left(2 \left(2 \lambda_1+\lambda_2\right) \lambda _2^2+\alpha_1     \left(\lambda _2-2 \lambda
   _1\right) \right. \right.  \nonumber \\
   & \left. \left. \times v_-  e^{\lambda _2 \xi }\right)\right\}  \text{/}Q(\xi), \nonumber \\
 v(\xi)=&4 \left(2 \lambda _1-\lambda _2\right) \left(2 \lambda _1+\lambda _2\right)^2   \left\{ \alpha_1    ^2
   \beta_1    ^2 \left(2 \lambda _1-\lambda _2\right) \lambda _2^4 u_-^4 v_-  e^{\left(4 \lambda
   _1+\lambda _2\right) \xi }+16 \beta_1     \lambda _1^2    \right. \nonumber \\
   & \times  \left. u_-^2 \left(2 \lambda _1+\lambda _2\right) e^{2
   \lambda _1 \xi } \left(4 \lambda _1^2 \left(2 \lambda _1+\lambda _2\right)^2 \lambda _2^4+\alpha_1    ^2
   \lambda _1^2 \left(\lambda _2-2 \lambda _1\right)^2 v_-^2  e^{2 \lambda _2 \xi } \right. \right. \nonumber \\
  &  \left. \left. -\alpha_1    
   \left(\lambda _2^3-4 \lambda _1^2 \lambda _2\right)^2 v_-  e^{\lambda _2 \xi }\right)  -64 \lambda
   _1^4 \left(\lambda _2-2 \lambda _1\right) \left(2 \lambda _1+\lambda _2\right)^4 \lambda _2^4 v_-
   e^{\lambda _2 \xi }\right\} \nonumber \\
   &\text{/} Q(\xi)^2
 \end{align}
 where
  \begin{align}\label{mgfun2}
Q(\xi)=& \alpha_1     \beta_1     u_-^2  e^{2 \lambda _1 \xi } \left\{ \alpha_1     \left(\lambda _2-2 \lambda _1\right)^2
   v_-   e^{\lambda _2 \xi }-2 \lambda _2^2 \left(2 \lambda _1+\lambda _2\right)^2\right\}+8 \lambda
   _1^2 \nonumber \\
   & \times \left(2 \lambda _1-\lambda _2\right) \left(2 \lambda _1+\lambda _2\right)^3  \left(2 \lambda_2^2-\alpha_1     v_-   e^{\lambda _2 \xi }\right).
  \end{align}
  Due to the existence of many free parameters in the solution (\ref{msol3})-(\ref{mgfun2}), it is always not bounded ( for all values of these free parameters). To study physical relevant properties modelled by these equations, one always need to derive a bounded solution. To ensure the boundedness of this derived solution, below a theorem, have been presented.
\begin{theorem}\label{th1} The solution (\ref{msol3})-(\ref{mgfun2}) will be bounded if parameters $\lambda_1, \  \lambda_2, \ \alpha_1   , \ \beta_1   $ involved in the equation and integration constants $u_- , \ v_-$ satisfy any one of the  conditions given in the  table \ref{table0}.
\begin{table}[h]
\begin{center}
\caption{Boundedness conditions of theorem \ref{th1}. }\label{table0}
\begin{tabular}{c c c c c c c} \hline
Case& Con. No. &  $\lambda_2$ &$\alpha_1   $ & $\beta_1   $&  $u_-$ & $v_-$ \\ \hline \hline
    \multirow{4}{*}{$\lambda_1< \frac{\lambda_2}{2}<0$}&1(a)&$-$& $+$ &$ -$&$-$& $-$ \\
    &1(b)&$-$& $-$ &$ +$&$-$& $+$ \\
    &1(c)&$-$& $+$ &$ -$&$+$& $-$ \\
    &1(d)&$-$& $-$ &$ +$&$+$& $+$ \\ \hline
    
        \multirow{4}{*}{$\lambda_1>- \frac{\lambda_2}{2}>0$}&2(a)&$-$& $+$ &$ -$&$-$& $-$ \\
    &2(b)&$-$& $-$ &$ +$&$-$& $+$ \\
    &2(c)&$-$& $+$ &$ -$&$+$& $-$ \\
    &2(d)&$-$& $-$ &$ +$&$+$& $+$ \\ \hline
    
     \multirow{4}{*}{$\lambda_1<- \frac{\lambda_2}{2}<0$}&3(a)&$+$& $+$ &$ -$&$-$& $-$ \\
    &3(b)&$+$& $-$ &$ +$&$-$& $+$ \\
    &3(c)&$+$& $+$ &$ -$&$+$& $-$ \\
    &3(d)&$+$& $-$ &$ +$&$+$& $+$ \\ \hline
       \multirow{4}{*}{$\lambda_1> \frac{\lambda_2}{2}>0$}&4(a)&$+$& $+$ &$ -$&$-$& $-$ \\
    &4(b)&$+$& $-$ &$ +$&$-$& $+$ \\
    &4(c)&$+$& $+$ &$ -$&$+$& $-$ \\
    &4(d)&$+$& $-$ &$ +$&$+$& $+$ \\ \hline \hline
    
      \multirow{4}{*}{$ \frac{\lambda_2}{2}< \lambda_1<0$}&5(a)&$-$& $+$ &$ +$&$-$& $-$ \\
    &5(b)&$-$& $-$ &$ -$&$-$& $+$ \\
    &5(c)&$-$& $+$ &$ +$&$+$& $-$ \\
    &5(d)&$-$& $-$ &$ -$&$+$& $+$ \\ \hline
    
        \multirow{4}{*}{$0<\lambda_1<- \frac{\lambda_2}{2}$}&6(a)&$-$& $+$ &$ +$&$-$& $-$ \\
    &6(b)&$-$& $-$ &$ -$&$-$& $+$ \\
    &6(c)&$-$& $+$ &$ +$&$+$& $-$ \\
    &6(d)&$-$& $-$ &$ -$&$+$& $+$ \\ \hline

      \multirow{4}{*}{$ -\frac{\lambda_2}{2}< \lambda_1<0$}&7(a)&$+$& $+$ &$ +$&$-$& $-$ \\
    &7(b)&$+$& $-$ &$ -$&$-$& $+$ \\
    &7(c)&$+$& $+$ &$ +$&$+$& $-$ \\
    &7(d)&$+$& $-$ &$ -$&$+$& $+$ \\ \hline
    
         \multirow{4}{*}{$ 0 < \lambda_1<\frac{\lambda_2}{2}$}&8(a)&$+$& $+$ &$ +$&$-$& $-$ \\
    &8(b)&$+$& $-$ &$ -$&$-$& $+$ \\
    &8(c)&$+$& $+$ &$ +$&$+$& $-$ \\
    &8(d)&$+$& $-$ &$ -$&$+$& $+$ \\ \hline
\end{tabular}
\end{center}
\end{table}
\end{theorem}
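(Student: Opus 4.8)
\emph{Proof plan.} The strategy is to recast both $u(\xi)$ and $v(\xi)$ as quotients of exponential polynomials in one variable and then establish boundedness from (i) non‑vanishing of the denominator on all of $\mathbb{R}$ and (ii) finiteness of the limits as $\xi\to\pm\infty$. Every row of Table~\ref{table0} forces $\lambda_1\neq0$ (and likewise $\lambda_2,\alpha_1,\beta_1,u_-,v_-$ are nonzero), so the substitution $w=e^{\lambda_1\xi}$ is a homeomorphism of $\mathbb{R}$ onto $(0,\infty)$; putting $\rho:=\lambda_2/\lambda_1$ we have $e^{\lambda_2\xi}=w^{\rho}$. Inserting this into (\ref{msol3})--(\ref{mgfun2}) turns $u$ into $P_u(w)/Q(w)$ and $v$ into $P_v(w)/Q(w)^{2}$, where
\[
Q(w)=A\,w^{2+\rho}+B\,w^{2}+C\,w^{\rho}+D,
\]
\[
A=\alpha_1^{2}\beta_1(\lambda_2-2\lambda_1)^{2}u_-^{2}v_-,\qquad B=-2\alpha_1\beta_1\lambda_2^{2}(2\lambda_1+\lambda_2)^{2}u_-^{2},
\]
\[
C=-8\lambda_1^{2}(2\lambda_1-\lambda_2)(2\lambda_1+\lambda_2)^{3}\alpha_1 v_-,\qquad D=16\lambda_1^{2}\lambda_2^{2}(2\lambda_1-\lambda_2)(2\lambda_1+\lambda_2)^{3},
\]
and $P_u,P_v$ are the induced numerators, again finite real combinations of powers of $w$ with nonzero extremal coefficients (the strict inequalities of each case give $2\lambda_1\pm\lambda_2\neq0$).

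\emph{Step 1: $Q$ has no zero on $(0,\infty)$.} Since $\lambda_1^{2},\lambda_2^{2},u_-^{2},(\lambda_2-2\lambda_1)^{2},(2\lambda_1+\lambda_2)^{2}$ are positive and $(2\lambda_1+\lambda_2)^{3}$ has the sign of $2\lambda_1+\lambda_2$, the signs of $A,B,C,D$ are controlled only by the signs of $\beta_1 v_-$, $\alpha_1\beta_1$, $\alpha_1 v_-$ and of the product $(2\lambda_1-\lambda_2)(2\lambda_1+\lambda_2)$. A short inspection of the eight defining inequalities shows this product is positive in cases $1$--$4$ and negative in cases $5$--$8$; combining this with the signs of $\alpha_1,\beta_1,v_-$ prescribed in a given row makes $A,B,C,D$ simultaneously positive (rows under cases $1$--$4$) or simultaneously negative (rows under cases $5$--$8$). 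Then for $w>0$ the four terms of $Q(w)$ share one sign, so $|Q(w)|\ge|D|>0$; in particular $Q$ never vanishes and $u,v$ are continuous in $\xi$ on all of $\mathbb{R}$. I would also remark that $u_-$ enters $Q$, $P_u$ and $P_v$ only through $u_-^{2}$ (or $u_-^{4}$), so the two rows of each case that differ solely in the sign of $u_-$ are equivalent --- they are interchanged by the invariance $u\mapsto-u$ recorded after (\ref{reimp}) --- which halves the bookkeeping.

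\emph{Step 2: the endpoint limits are finite.} It remains to rule out blow‑up as $\xi\to\pm\infty$, i.e.\ as $w\to0^{+}$ and $w\to\infty$. For a quotient $(\sum_i a_i w^{e_i})/(\sum_j b_j w^{f_j})$ whose denominator coefficients $b_j$ all have one sign, factoring out the dominant power gives limit $0$ at $0^{+}$ whenever $\min_i e_i>\min_j f_j$ and limit $0$ at $\infty$ whenever $\max_i e_i<\max_j f_j$; moreover the smallest and largest exponents occurring in $Q^{2}$ are twice those of $Q$. The exponent sets here are $\{1,\,1+\rho\}$ for $P_u$, $\{2+\rho,\,2,\,\rho,\,0\}$ for $Q$, and $\{4+\rho,\,2,\,2+2\rho,\,2+\rho,\,\rho\}$ for $P_v$. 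The case inequalities pin $\rho$ to exactly one of $(0,2)$ (cases $1,4$), $(-2,0)$ (cases $2,3$), $(2,\infty)$ (cases $5,8$), or $(-\infty,-2)$ (cases $6,7$), and in each of these four ranges the required strict inequalities between extremal exponents hold outright; hence $u(\xi)\to0$ and $v(\xi)\to0$ as $\xi\to\pm\infty$. Being continuous on $\mathbb{R}$ with vanishing limits at $\pm\infty$, $u$ and $v$ are bounded, which is the assertion.

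\emph{Anticipated difficulty.} The only substantive work is the sign audit in Step~1 --- checking that each row of the table really does make $A,B,C,D$ co‑signed. This is elementary but must be done with care; it is kept short by reducing everything to the single quantity $(2\lambda_1-\lambda_2)(2\lambda_1+\lambda_2)$, to the four intervals for $\rho$, and to the $u_-\mapsto-u_-$ symmetry. Step~2 is then a routine comparison of exponents, and the final topological conclusion is immediate.
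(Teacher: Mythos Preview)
Your approach coincides with the paper's at the decisive point: both reduce boundedness to the observation that the common denominator $Q(\xi)$ is an exponential sum with four terms whose coefficients (your $A,B,C,D$, the paper's $a_1,\dots,a_4$) are identical, and both argue that under each row of the table these four coefficients share a sign, so $Q$ never vanishes on the real line. The paper phrases this via the substitution $Z=e^{\xi}$ and an appeal to Descartes' rule of signs for generalized Dirichlet polynomials; your direct inequality $|Q(w)|\ge|D|>0$ accomplishes the same thing without the citation.

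Where you go further than the paper is in Step~2. The paper stops after ``no positive root of $Pol(Z)$, hence bounded'', tacitly taking for granted that the numerators cannot outgrow the denominator as $\xi\to\pm\infty$. You make this explicit by comparing the extremal exponents of $P_u$, $P_v$ and $Q$ in each of the four $\rho$-intervals, which closes that gap. Your remark on the $u_-\mapsto-u_-$ symmetry is also an addition; the paper does not exploit it in the proof, although the invariance is noted after the reduction to the ODE system. In short: same core idea, but your write-up is more self-contained and supplies the asymptotic check that the paper leaves implicit.
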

\begin{proof}  Solution (\ref{msol3})-(\ref{mgfun2}) have a common factor in the denominator, which under transformation $e^{\xi}=Z$, reduces to a generalized Dirichlet polynomial \cite{jameson2006counting}
$$Pol(Z)= a_1 \  Z^{2 \lambda _1+\lambda _2}+a_2 \ Z^{2 \lambda_1} +a_3 \  Z^{\lambda _2}+a_4 ,   $$
where 
\begin{align*}
& a_1=\alpha_1    ^2 \beta_1     \left(\lambda _2-2 \lambda _1\right)^2 u_-^2 v_- ,\ \ a_2=-2 \alpha_1     \beta_1     \lambda _2^2 \left(2 \lambda _1+\lambda_2\right)^2 u_-^2 , \\
& a_3=-8 \alpha_1     \lambda _1^2 \left(2 \lambda _1-\lambda _2\right) \left(2 \lambda _1+\lambda _2\right)^3 v_- , \ \
 a_4=16 \lambda_1^2 \left(2 \lambda _1-\lambda _2\right) \lambda_2^2 \left(2 \lambda _1+\lambda _2\right)^3.
\end{align*}
 Solution (\ref{msol3})-(\ref{mgfun2}) is  unbounded if there exists at least one real positive root. So the boundedness of the solution is ensured by the conditions that the polynomial never have positive real root. Such conditions can be  provided by Descartes’ rule of
signs \cite{jameson2006counting}. Which states that the polynomial $Pol(Z)$ does not contain any positive real root if its coefficients never changes their signs. That leads us to the condition that either all $a_i>0\ (i=1,2,3,4)$ or all $a_i<0\ (i=1,2,3,4)$. Restrictions $a_i>0\ (i=1,2,3,4)$ yields the condition 1.(a)-4.(d) of the theorem, whereas remaining conditions provided by   $a_i<0\ (i=1,2,3,4)$. That completes the prove of the theorem.
\end{proof}
Next, we are interested in establishing the conditions presented in the above theorem and study the features of the solution  (\ref{msol3})-(\ref{mgfun2}). For that, we have  taken particular values for different parameter  satisfying the conditions of the theorem in table \ref{table1} and utilising them to plot the solution shown in figure \ref{fig1}. From the 2D plots, it is clear that the solution is enriched with several one-hump, two-hump, W-shape, M-shape soliton like features.\\
\begin{table}[h]
\begin{center}
\caption{Particular values of parameters satisfying conditions presented in Theorem \ref{th1} used  in Fig. \ref{fig1}. }\label{table1}
\begin{tabular}{c c c c c c c} \hline
$\lambda_1$ &  $\lambda_2$ &$\alpha_1   $ & $\beta_1   $&  $u_-$ & $v_-$ & Figure \\ \hline \hline
    $-.3$&$-.5$& $1.9$ &$ -.5$&$-.5$& $-.6$ &1(a)\\
    $-.29$&$-.5$& $-.19$ &$ .5$&$-.5$& $.6$ &1(b)\\
  $-2.9$ &$-5$& $.19$ &$ -.5$&$.5$& $-.6$ &1(c) \\
    $-2.9$&$-5$& $-19$ &$ 5$&$5$& $6$ &1(d)\\ \hline
    $2.9$&$-5$& $19$ &$ -.15$&$-5$& $-2$&2(a) \\
    $2.9$&$-5$& $-19$ &$ .15$&$-5$& $6$&2(b) \\
    $.29$&$-.5$& $1.9$ &$ -5$&$5$& $-6$&2(c) \\
    $.29$&$-.5$& $-19$ &$ .15$&$5$& $6$ &2(d)\\ \hline
    
    $-.26$&$.5$& $19$ &$ -.15$&$-5$& $-6$&3(a) \\
    $-.28$&$.5$& $-19$ &$ 15$&$-5$& $6$ &3(b)\\
    $-.6$&$.5$& $1.9$ &$ -1.5$&$5$& $6$ &3(c)\\
    $-.06$&$.1$& $-1.9$ &$ 1.5$&$5$& $6$&3(d) \\ \hline
       $.27$&$.5$& $19$ &$ -.15$&$-5$& $-6$&4(a) \\
   $.27$&$.5$& $-19$ &$ .15$&$-5$& $6$ &4(b) \\
    $.4$&$.5$& $19$ &$ -.15$&$5$& $-1.6$&4(c) \\
    $.51$&$.5$& $-7$ &$ 1.5$&$5$& $6$&4(d) \\ \hline \hline
    
    $-.2$&$-.5$& $1.9$ &$ .5$&$-.5$& $-.6$ &5(a)\\
   $-.09$&$-.2$& $-1.9$ &$ -.5$&$-.5$& $.6$ &5(b) \\
    $-.05$&$-.2$& $.9$ &$.5$&$.5$& $-.6$&5(c) \\
    $-.09$&$-.2$& $-.9$ &$ -5$&$.5$& $.6$ &5(d)\\ \hline
    
        $.05$&$-.2$& $1.9$ &$ 1.5$&$-.01$& $-.6$&6(a) \\
   $.05$ &$-.2$& $-1.9$ &$ -1.5$&$-.1$& $.01$ &6(b)\\
    $.04$&$-.2$& $1.9$ &$ 1.5$&$.1$& $-.01$&6(c) \\
    $.06$&$-.2$& $-1.9$ &$ -1.5$&$.1$& $.01$&6(d) \\ \hline

   $-.09$&$.2$& $1.9$ &$ 1.5$&$-.1$& $-10.1$ &7(a)\\
    $-.03$&$.2$& $-1.9$ &$ -1.5$&$-.1$& $10.1$&7(b) \\
   $-.07$&$.2$& $1.2$ &$ 1.1$&$.3$& $-9$ &7(c) \\
    $-.11$&$.4$& $-1$ &$ -3$&$7$& $4$ &7(d)\\ \hline
    
   $.04$&$.1$& $.9$ &$ 1$&$-.7$& $-8$&8(a) \\
    $.03$&$.1$& $-.9$ &$ -1$&$-.7$& $7$&8(b) \\
    $.2$&$.5$& $8$ &$ .7$&$6$& $-7$&8(c) \\
    $.12$&$.4$& $-8$ &$ -.7$&$6$& $7$&8(d) \\ \hline
\end{tabular}
\end{center}
\end{table}
\begin{figure*}[h]
\centering
\includegraphics[width=1\textwidth]{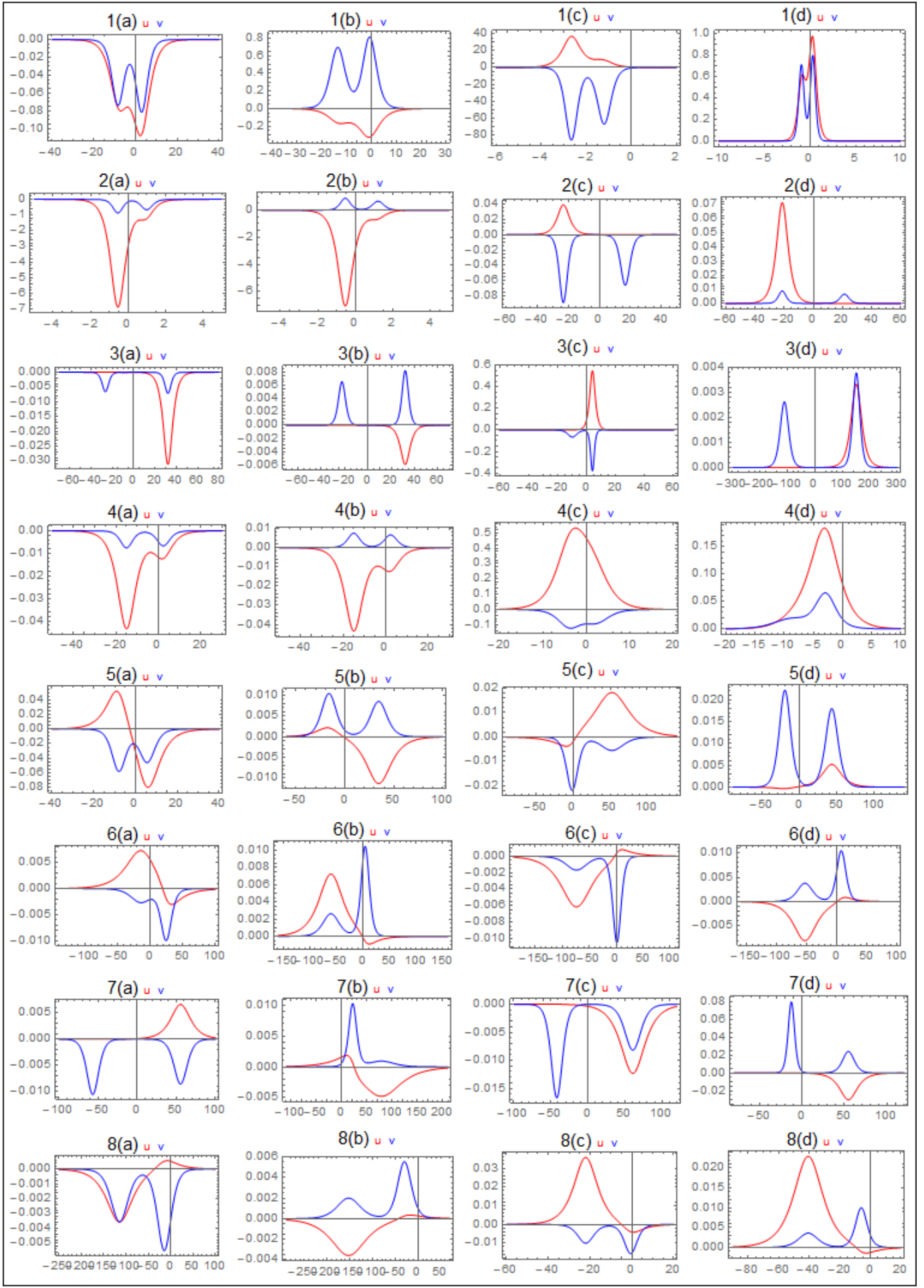}
\vskip 0.05in
\caption{Plots of the solution $u(\xi),\ v(\xi)$ for the conditions presented in Theorem \ref{th1}, using values of the parameters submitted  in
   Table \ref{table1}.}\label{fig1}       
\end{figure*}
\textbf{Solution-I }\\
So in this case solution of (\ref{SBs}) can be obtained from (\ref{msol3})-(\ref{mgfun2}) with (\ref{SBsTr}) and (\ref{reimp}) in the form
 \begin{align}\label{SBSs1}
 E_1(x, t)=&\left\{8 \lambda_1^2 \left(2 \lambda_1-\lambda_2\right) \left(2 \lambda_1+\lambda _2\right)^2 u_- e^{\lambda_1
   \xi } \left(2 \left(2 \lambda_1+\lambda_2\right) \lambda _2^2+\alpha_1     \left(\lambda _2-2 \lambda
   _1\right) \right. \right.  \nonumber \\
   & \times \left. \left.   v_-  e^{\lambda _2 \xi }\right)\right\}\text{/}Q(\xi)\ e^{i(k_1\ \frac{x^{\alpha}}{\alpha}+k_2\ \frac{t^{\beta}}{\beta}+c_0)}, \nonumber \\
 N_1(x, t)&=4 \left(2 \lambda _1-\lambda _2\right) \left(2 \lambda _1+\lambda _2\right)^2   \left\{ \alpha_1    ^2
   \beta_1    ^2 \left(2 \lambda _1-\lambda _2\right) \lambda _2^4 u_-^4 v_-  e^{\left(4 \lambda
   _1+\lambda _2\right) \xi } \right. \nonumber \\
   &   \left. +16 \beta_1     \lambda _1^2  u_-^2  \left(2 \lambda _1+\lambda _2\right) e^{2
   \lambda _1 \xi } \left(4 \lambda _1^2 \left(2 \lambda _1+\lambda _2\right)^2 \lambda _2^4+\alpha_1^2
   \lambda _1^2 \left(\lambda _2-2 \lambda _1\right)^2 \right. \right. \nonumber \\
  &  \left. \left. \times v_-^2  e^{2 \lambda _2 \xi }  -\alpha_1    
   \left(\lambda _2^3-4 \lambda _1^2 \lambda _2\right)^2 v_-  e^{\lambda _2 \xi }\right)  -64 \lambda
   _1^4 \left(\lambda _2-2 \lambda _1\right) \left(2 \lambda _1+\lambda _2\right)^4 \right. \nonumber \\
   & \left. \times \lambda _2^4 v_-  e^{\lambda _2 \xi }\right\} \text{/} Q(\xi)^2
 \end{align}
 where
  \begin{align}\label{SBSs1a}
Q(\xi)=& \alpha_1     \beta_1     u_-^2  e^{2 \lambda _1 \xi } \left\{ \alpha_1     \left(\lambda _2-2 \lambda _1\right)^2
   v_-   e^{\lambda _2 \xi }-2 \lambda _2^2 \left(2 \lambda _1+\lambda _2\right)^2\right\}+8 \lambda
   _1^2\nonumber \\
   & \times  \left(2 \lambda _1-\lambda _2\right) \left(2 \lambda _1+\lambda _2\right)^3  \left(2 \lambda_2^2-\alpha_1     v_-   e^{\lambda _2 \xi }\right),\ \ \  \xi=k_3 \ \frac{x^{\alpha}}{\alpha}+c\ \frac{t^{\beta}}{\beta}+\xi_0.
  \end{align}
This solution exists provided 
\begin{eqnarray}\label{res1}
\delta_3=-\frac{\delta_2 \mu _3}{3 \mu _2}, \ \ \frac{k_1 \left(\delta_1+\delta_2 k_1 \right)+k_2}{\delta_2 \ k_3^2} > 0, \ \ \text{and} \nonumber \\ -\frac{\delta_1^2+4 \delta _2 \delta _1 k_1+4 \delta _2^2 k_1^2+\mu _1}{\mu _2 \ k_3^2} > 0, 
\end{eqnarray}
which are yield from the integrability condition $\gamma_1    =3 \alpha_1   $ and the requirement that $\lambda_1, \ \lambda_2$ are real. The boundedness of solution (\ref{SBSs1})-(\ref{res1}) can be easily ensured by theorem \ref{th1} with the conditions (\ref{reimp}). Utilising one of such conditions the solution (\ref{SBSs1})-(\ref{res1}) has been plotted in figure \ref{fig2}. 
From the Fig. \ref{fig1}, on the (x,t)-plane, it is clear that absolute value of solution $E_1$ represents W-shaped soliton wave, real and imaginary parts of $E_1$  of the wave solution represent the Akhmediev breather (AB) wave, which can evolve periodically along  a certain angle with the t axis and component $N_1$ propagate with W-shaped soliton wave.
  \begin{figure}[h]
\captionsetup[subfigure]{labelformat=empty}
  \centering
\begin{subfigure}[t]{.35\linewidth}
    \centering
    \includegraphics[width=1\textwidth]{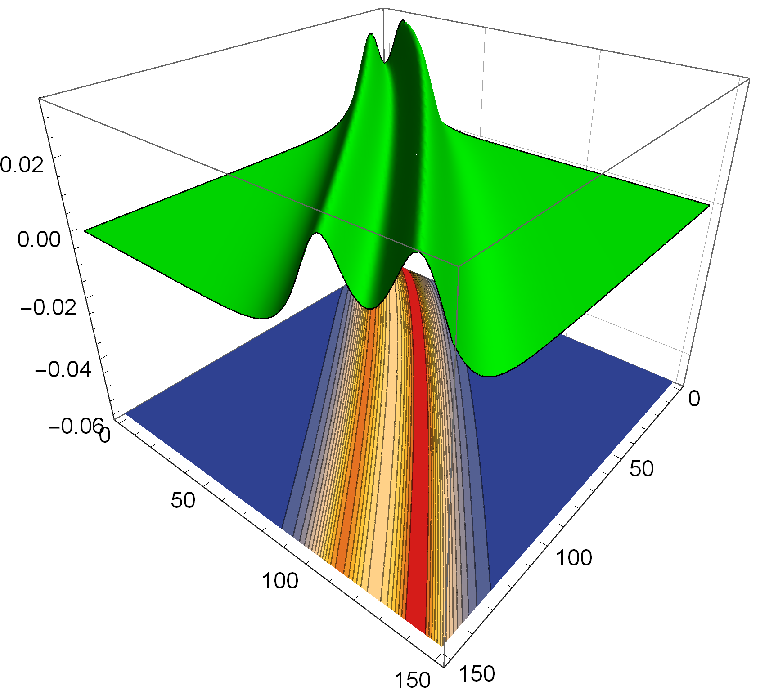}
    \caption{Abs($E_1$)}\label{fig:1b}
  \end{subfigure}%
  \hspace*{1.9em}  
  \begin{subfigure}[t]{.35\linewidth}
    \centering
    \includegraphics[width=1\textwidth]{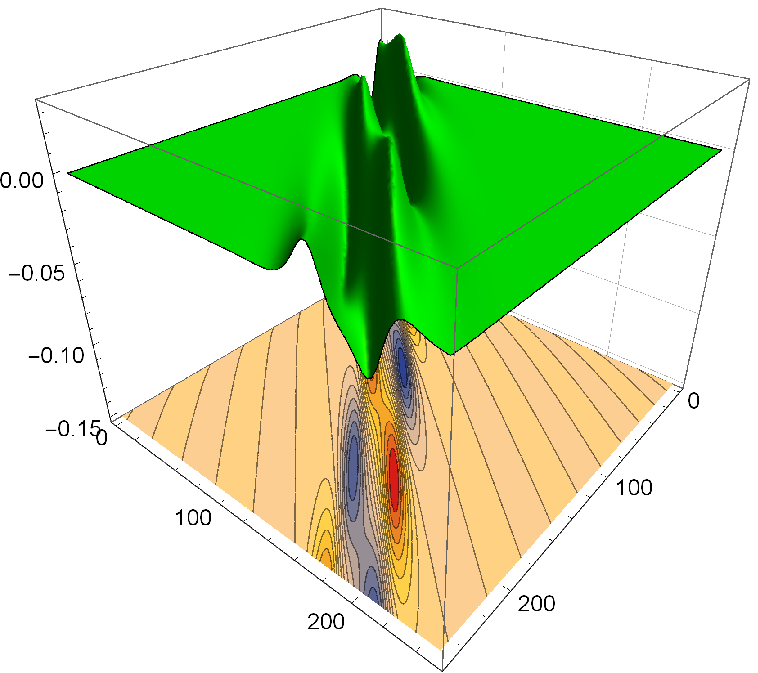}
     \caption{Re($E_1$)}\label{fig:1b}
  \end{subfigure} \\
   \vspace*{1em} 
  \begin{subfigure}[t]{.35\linewidth}
    \centering
    \includegraphics[width=1\textwidth]{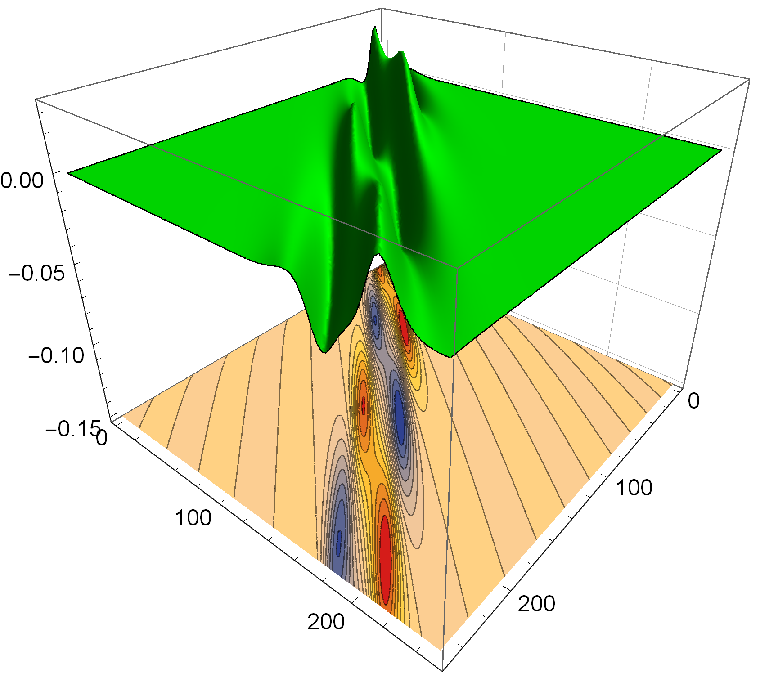}
    \caption{Im($E_1$)}\label{fig:1d}
  \end{subfigure}%
  \hspace*{1.9em}    
  \begin{subfigure}[t]{.35\linewidth}
    \centering
    \includegraphics[width=1\textwidth]{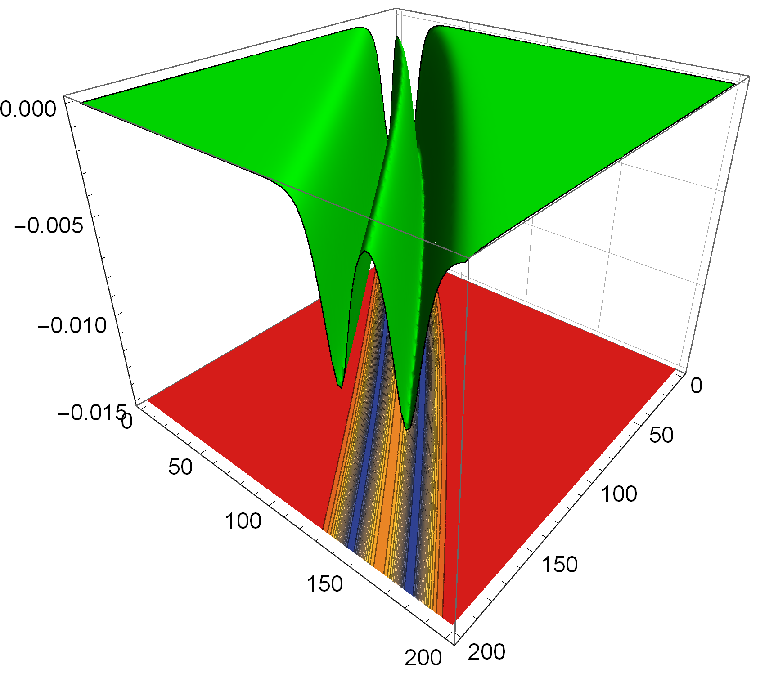}
    \caption{$N_1$}\label{fig:1n}
  \end{subfigure}  
  \caption{Plots of the solution (\ref{SBSs1})-(\ref{SBSs1a}) for  values of parameters values $k_1= 0.3, \ k_2= -0.1585, \ k_3= 2,\ \delta_1= 0.5,\ \delta_2= 0.5,\ \mu_1=- 0.515,\ \mu_2= -0.5,\ \mu_3=28.5,\ \mu_4= 0.075,\ \alpha=.8,\ \beta=.9,\ c_0=9, \ \xi_0=.4, \ u_-= -0.5,\ v_-=-0.6.$ }\label{fig2}
\end{figure}
\subsection{Case-II \ $\alpha_1   =2\ \gamma_1    , \ \ \ \lambda_1=\lambda_2=\lambda$ (say)}
In this case, RCAM gives the following correction terms
 \begin{align*}
& \begin{cases}
  u_0(\xi )=u_-\ e^{-\lambda \ \xi },\\
  v_0(\xi )=v_- \ e^{-\lambda \ \xi },
 \end{cases} \\
&  \begin{cases}
  u_1(\xi )=\frac{2\ \gamma_1    \ u_-\ v_-\ e^{-2\ \lambda \ \xi }}{3\ \lambda^2},\\
  v_1(\xi )=\frac{e^{-2\ \lambda \ \xi } \left(\beta_1    \ u_-^2+\gamma_1    \  v_-^2\right)}{3 \ \lambda^2},
 \end{cases} \\
 & \begin{cases}
  u_2(\xi )=\frac{\gamma_1    \  u_-\ e^{-3 \lambda  \xi } \left(\beta_1   \  u_-^2+3 \gamma_1    \  v_-^2\right)}{12\ \lambda^4},\\
  v_2(\xi )=\frac{\gamma_1    \ v_- \ e^{-3 \lambda  \xi } \left(3\ \beta_1    \ u_-^2+\gamma_1     \ v_-^2\right)}{12 \lambda ^4},
 \end{cases}\\
&  \begin{cases}
  u_3(\xi )=\frac{2\ \gamma_1    ^2\ u_-\  v_-\ e^{-4 \lambda  \xi } \left(\beta_1   \  u_-^2+\gamma_1     \  v_-^2\right)}{27\ \lambda^6},\\
  v_3(\xi )=\frac{\gamma_1     \ e^{-4 \lambda  \xi } \left(\beta_1   ^2\ u_-^4+6\ \beta_1    \ \gamma_1     \ u_-^2 \ v_-^2+\gamma_1    ^2\ v_-^4\right)}{54\ \lambda ^6},
 \end{cases}
  \end{align*}
 \begin{align*}
& \ \ \ \ \vdots \\
& \begin{cases}
  u_m(\xi )= \frac{(m+1)e^{-\lambda  \xi }}{2^{m+1} 3^{m}  \sqrt{\beta_1   }} & \left[\left(\sqrt{\beta_1    } u_--\sqrt{\gamma_1     } v_-\right) \left[\frac{e^{-\lambda  \xi } \left(\gamma_1      v_--\sqrt{\beta_1    } \sqrt{\gamma_1     } u_-\right)}{\lambda ^2}\right]^m \right. \\
 & \left. +\left(\sqrt{\beta_1    } u_-+\sqrt{\gamma_1     } v_-\right) \left[\frac{e^{-\lambda  \xi } \left(\sqrt{\beta_1    } \sqrt{\gamma_1     } u_-+\gamma_1     
   v_-\right)}{\lambda ^2}\right]^m\right], \\
  v_m(\xi )= \frac{(m+1)e^{-\lambda  \xi }}{2^{m+1} 3^{m}  \sqrt{\gamma_1    }} & \left[\left(\sqrt{\gamma_1     } v_--\sqrt{\beta_1    } u_-\right) \left[\frac{e^{-\lambda  \xi } \left(\gamma_1      v_--\sqrt{\beta_1    } \sqrt{\gamma_1     } u_-\right)}{\lambda ^2}\right]^m \right. \\
& \left. +\left(\sqrt{\beta_1    } u_-+\sqrt{\gamma_1     } v_-\right) \left[\frac{e^{-\lambda  \xi } \left(\sqrt{\beta_1    } \sqrt{\gamma_1     } u_-+\gamma_1     
   v_-\right)}{\lambda ^2}\right]^m\right],
 \end{cases}\\
& \ \ \ \ \vdots
 \end{align*}
where $u_-$ and $v_-$ are integration constants. Summing the above series terms one can obtain close form solution of (\ref{meq1}) in the form
 \begin{eqnarray}\label{sol2}
 \begin{cases}
 u(\xi)=\frac{36 \lambda^4 u_- e^{\lambda  \xi } \left(36 \lambda ^4 e^{2 \lambda  \xi }+\beta_1     \gamma_1      u_-^2-\gamma_1     ^2 v_-^2\right)}{\left(36 \lambda ^4 e^{2 \lambda  \xi }-\beta_1    \gamma_1      u_-^2+\gamma_1     ^2 v_-^2-12 \gamma_1      \lambda ^2 v_- e^{\lambda  \xi }\right)^2},\\
 v(\xi)= \frac{36 \lambda ^4 e^{\lambda  \xi } \left(12 \beta_1     \lambda ^2 u_-^2 e^{\lambda  \xi }-\beta_1     \gamma_1      u_-^2 v_-+\gamma_1     ^2 v_-^3-12 \gamma_1      \lambda ^2 v_-^2 e^{\lambda  \xi}+36 \lambda ^4 v_- e^{2 \lambda  \xi }\right)}{\left(36 \lambda ^4 e^{2 \lambda  \xi }-\beta_1     \gamma_1      u_-^2+\gamma_1     ^2 v_-^2-12 \gamma_1      \lambda ^2 v_- e^{\lambda  \xi
   }\right)^2}.
  \end{cases} 
 \end{eqnarray}
In the following, a theorem has been proposed and proved to ensure the boundedness of this derived solution.
 \begin{theorem}\label{th2} The solution (\ref{sol2}) will be bounded if real parameters $\lambda, \  \beta_1   ,\  \ \gamma_1    $ present in the equation and integration constants $v_-, \ u_-$ involved in solution satisfies any one of the following conditions
\begin{description}
\item{Ia.} \ $\  \beta_1   >0\ \& \ \gamma_1     <0$
\item{Ib.} \ $\  \beta_1   <0\  \& \ \gamma_1     >0$
\item{IIa.} \ $\ \beta_1    >0,\ \gamma_1    >0,\ v_- <0    \  \&  -\sqrt{\frac{\gamma_1      v_-^2}{\beta_1    }}<u_-<\sqrt{\frac{\gamma_1     v_-^2}{\beta_1    }}$
\item{IIb.} \ $\ \beta_1    <0,\ \gamma_1    <0,\ v_- >0    \  \&  -\sqrt{\frac{\gamma_1      v_-^2}{\beta_1    }}<u_-<\sqrt{\frac{\gamma_1     v_-^2}{\beta_1    }}$
\end{description}
\end{theorem}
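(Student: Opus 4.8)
The plan is to mimic the argument used for Theorem~\ref{th1}: pass to the variable $Z=e^{\lambda\xi}$, recognise the common denominator of~(\ref{sol2}) as the square of a single quadratic in $Z$, and convert the requirement ``no positive real root'' into the parameter inequalities. With $Z=e^{\lambda\xi}\in(0,\infty)$, the common denominator of $u(\xi)$ and $v(\xi)$ equals $[q(Z)]^{2}$, where
$$q(Z)=36\lambda^{4}Z^{2}-12\gamma_1\lambda^{2}v_-\,Z+\bigl(\gamma_1^{2}v_-^{2}-\beta_1\gamma_1 u_-^{2}\bigr),$$
and the numerators of $u$ and $v$ are polynomials in $Z$ of degree $3$ (one below $\deg[q(Z)]^{2}=4$) carrying an overall factor $Z$. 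Consequently $u,v\to0$ as $\xi\to\pm\infty$ (i.e.\ as $Z\to0^{+}$ and $Z\to+\infty$), so each is continuous on $\mathbb{R}$ and vanishes at infinity, hence bounded, as soon as $q(Z)\neq0$ for all $Z>0$. The whole theorem therefore reduces to guaranteeing that the quadratic $q$ has no positive real root.

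The key computation is the discriminant $\Delta=144\lambda^{4}\beta_1\gamma_1 u_-^{2}$ of $q$, which produces precisely the two regimes of the statement. When $\beta_1$ and $\gamma_1$ have opposite signs, $\Delta<0$ (assuming the non-degeneracy $u_-\neq0$), so $q$ has no real root and, its leading coefficient being positive, $q(Z)>0$ everywhere; the solution is then bounded with no further restriction, which is exactly cases~Ia and~Ib. When $\beta_1$ and $\gamma_1$ have the same sign, $\Delta\geq0$ and $q$ has two real roots; I would demand that both be non-positive, which by Vieta's formulas (sum $=\gamma_1 v_-/(3\lambda^{2})$, product $=(\gamma_1^{2}v_-^{2}-\beta_1\gamma_1 u_-^{2})/(36\lambda^{4})$) is equivalent to $\gamma_1 v_-<0$ together with $\gamma_1^{2}v_-^{2}-\beta_1\gamma_1 u_-^{2}>0$. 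Equivalently, exactly as in Theorem~\ref{th1}, one checks that under these constraints all three coefficients of $q$ are positive, so Descartes' rule forbids a positive root. Splitting into $\beta_1,\gamma_1>0$ and $\beta_1,\gamma_1<0$ and solving the two inequalities — taking care to reverse an inequality when dividing by $\beta_1<0$ — gives the sign condition on $v_-$ and the bound $|u_-|<\sqrt{\gamma_1 v_-^{2}/\beta_1}$, i.e.\ cases~IIa and~IIb. Assembling the four cases completes the proof.

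The algebra (writing the denominator as a perfect square, computing $\Delta$, the Vieta/Descartes bookkeeping, and resolving the quadratic inequality defining the admissible range of $u_-$) is routine; I expect the only steps needing attention to be the reduction in the first paragraph — one must genuinely confirm the degree count for the numerators and the resulting decay at $\xi=\pm\infty$ so that boundedness really is controlled by the positive zeros of $q$ alone — and the sign gymnastics in the same-sign regime, where the direction of the inequality on $u_-$ flips between cases~IIa and~IIb.
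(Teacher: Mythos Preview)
Your proposal is correct and follows essentially the same route as the paper: reduce to the quadratic $q(Z)=36\lambda^{4}Z^{2}-12\gamma_1\lambda^{2}v_-Z+(\gamma_1^{2}v_-^{2}-\beta_1\gamma_1 u_-^{2})$ in $Z=e^{\lambda\xi}$ and rule out positive real roots. The only cosmetic difference is that the paper writes the roots explicitly as $Z_\pm=(\gamma_1 v_-\pm\sqrt{\beta_1}\sqrt{\gamma_1}\,u_-)/(6\lambda^{2})$ and checks their signs case by case, whereas you use the discriminant together with Vieta/Descartes; your added remark on the degree count and decay at $\xi\to\pm\infty$ is a small improvement over the paper, which takes that step for granted.
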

\begin{proof} The solution (\ref{msol3}) components  have common denominator given by the cubic polynomial
$$Pol(Z)= -\beta_1     \gamma_1      u_-^2+\gamma_1     ^2 v_-^2-12 \gamma_1      \lambda ^2 v_- Z+36 \lambda ^4 Z^2, $$
where $Z=\text{exp}(\lambda\ \xi)$. Solution (\ref{sol2}) is bounded if the roots of the polynomial are  either negative real or complex. In renaming cases they are unbounded. So boundedness of solution demands that roots of $Pol(Z)$ have to be negative real, complex, or both. Roots of the above polynomial are given by
\begin{eqnarray*}
Z_{\pm}=\frac{\gamma_1     v_-\pm \sqrt{\beta_1    } \sqrt{\gamma_1     } u_-}{6 \lambda ^2}
\end{eqnarray*}
Roots $Z_{\pm}$ will be complex if $\beta_1    \gamma_1     <0$, which provides boundedness conditions Ia \& Ib of the theorem. Remaining boundedness conditions can be obtained when  $Z_{\pm}$ are negative real. Now  $Z_{-}$ will be negative real if parameters satisfy the relations
\begin{description}
\item{I.} $\  \beta_1   >0, \ \gamma_1     >0\ \&  \ (\gamma_1     v_-- \sqrt{\beta_1    } \sqrt{\gamma_1     } u_-)<0 $
\item{II.} $\  \beta_1   <0, \ \gamma_1     <0\ \&  \ (\gamma_1     v_-- \sqrt{\beta_1    } \sqrt{\gamma_1     } u_-)<0. $
\end{description}
Which further can be put in the forms
\begin{description}
\item{I.a.} \ $\beta_1    >0,\ \gamma_1     >0,\ v_-\leq 0,\ \& \ u_->-\sqrt{\frac{\gamma_1      v_-^2}{\beta_1    }} $
\item{I.b.} \ $\beta_1    >0,\ \gamma_1     >0,\ v_-> 0,\ \& \ u_->\sqrt{\frac{\gamma_1      v_-^2}{\beta_1    }} $
\item{II.a.} \ $\beta_1    <0,\ \gamma_1     <0,\ v_-> 0,\ \& \ u_->-\sqrt{\frac{\gamma_1      v_-^2}{\beta_1    }} $
\item{II.b.} \ $\beta_1    <0,\ \gamma_1     <0,\ v_-\leq 0,\ \& \ u_->\sqrt{\frac{\gamma_1      v_-^2}{\beta_1    }}. $
\end{description}
Now  $Z_{+}$ will be negative real if parameters satisfy the relations
\begin{description}
\item{1.} $\  \beta_1   >0, \ \gamma_1     >0\ \&  \ (\gamma_1     v_-+ \sqrt{\beta_1    } \sqrt{\gamma_1     } u_-)<0 $
\item{2.} $\  \beta_1   <0, \ \gamma_1     <0\ \&  \ (\gamma_1     v_-+ \sqrt{\beta_1    } \sqrt{\gamma_1     } u_-)<0 $
\end{description}
Which further can be recast in the forms
\begin{description}
\item{1.a.} $\  \beta_1   >0, \ \gamma_1     >0,\ v_- >0\  \&  \ u_-<-\sqrt{\frac{\gamma_1      v_-^2}{\beta_1    }} $
\item{1.b.} $\  \beta_1   >0, \ \gamma_1     >0,\ v_- \leq 0\  \ \&  \ u_-<\sqrt{\frac{\gamma_1      v_-^2}{\beta_1    }} $
\item{2.a.} $\beta_1    <0,\ \gamma_1     <0,v_-\leq 0, \ \& \ u_-< -\sqrt{\frac{\gamma_1      v_-^2}{\beta_1    }}$
\item{2.b.} $\beta_1    <0,\ \gamma_1     <0,\ v_->0,  \ \& \ u_- < \sqrt{\frac{\gamma_1      v_-^2}{\beta_1    }}.$
\end{description}
Now combining negative real  $Z_{\pm}$ conditions, we get the valid two common  regions summarised in conditions IIa. \& IIb. of the theorem.  
  Hence the theorem is proved.
\end{proof}
\begin{table}[h]
\begin{center}
\caption{Particular values of parameters satisfying conditions presented in Theorem \ref{th2} \& \ref{th3}, used  in Fig. \ref{fig3}. }\label{table2}
\begin{tabular}{c c c c c c c} \hline 
 $\lambda$ &$\alpha_1   $ & $\beta_1   $& $\gamma_1    $ & $u_-$ & $v_-$&Fig. \\ \hline \hline
$.3$&&$30$&$-20$&$2$ &$-15$&Ia\\ 
$.3$&&$-30$&$20$&$-2$& $15$&Ib \\ 
$.3$&&$30$&$20$&$2.44 $& -3&IIa\\ 
$.3$&&$-30$&$-20$&$18.7$& -3&IIb\\ \hline \hline

$.8$&$.1$ &$30$&$-.01$&$-2$ & &IIIa\\ 
$.8$&$1$&$-3$&$3$&$-2$& &IIIb \\ 
$1$&$-.1$&$.3$&$-3$&$2$& &IIIc\\ 
$1$&$-1$&$-3$&$3$&$2$& &IIId\\ \hline
\end{tabular}
\end{center}
\end{table}
\begin{figure}[h]
    \centering
    \includegraphics[width=1\textwidth]{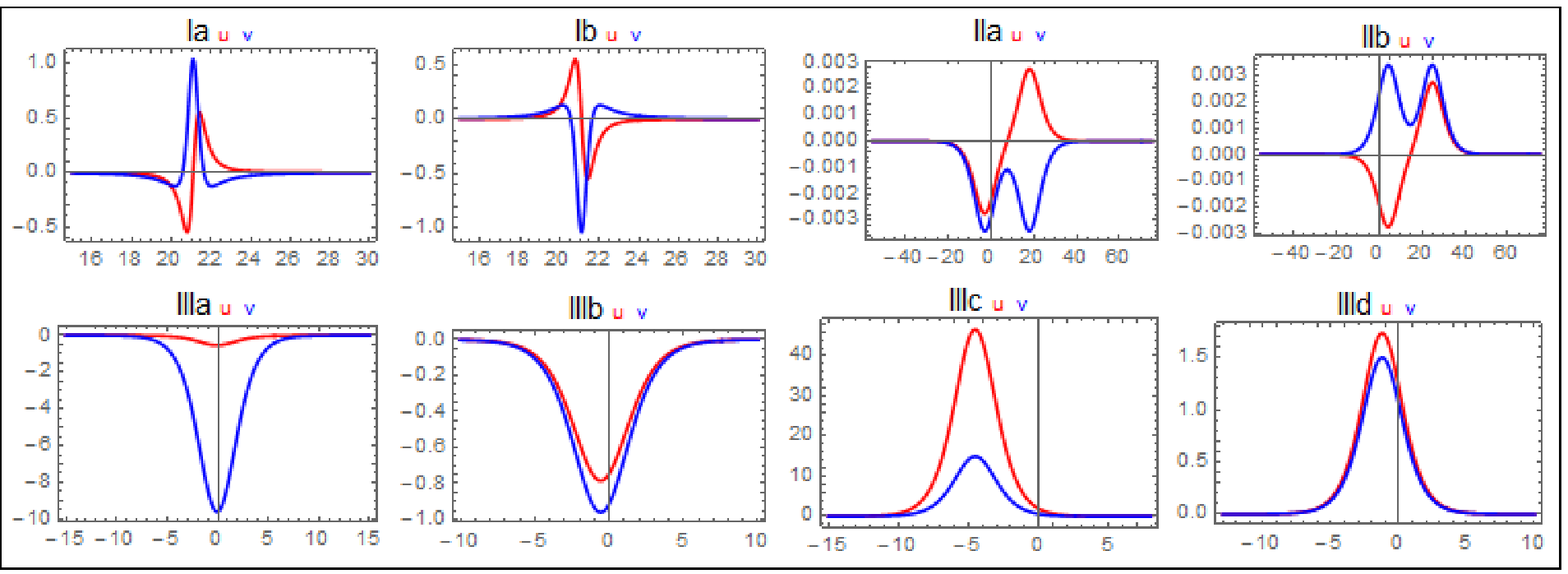}  
  \caption{Plot of solutions (\ref{sol2}) and (\ref{sol3}) for the conditions presented in Theorem \ref{th2} \& \ref{th3}, using values of parameters given in Table \ref{table2}. }\label{fig3}
\end{figure}
Next to establishing the conditions presented in the above theorem and study the features of the solution  (\ref{sol2}), we have  taken particular values of different parameters  satisfying the conditions of the theorem in table \ref{table2} and utilising them to plot the solution  in figure \ref{fig3}. From the 2D plots, it is clear that the solution is enriched with several  two-hump, W-shape, M-shape soliton-like profile.\\
\textbf{Solution-II }\\
So in this case solution of (\ref{SBs}) can be obtained from (\ref{sol2}) with (\ref{SBsTr}) and (\ref{reimp}) in the form
 \begin{align}\label{SBSs2}
 E_2(x, t)=&\frac{36 \lambda^4 u_- e^{\lambda  \xi } \left(36 \lambda ^4 e^{2 \lambda  \xi }+\beta_1     \gamma_1      u_-^2-\gamma_1     ^2 v_-^2\right)}{\left(36 \lambda ^4 e^{2 \lambda  \xi }-\beta_1    \gamma_1      u_-^2+\gamma_1     ^2 v_-^2-12 \gamma_1      \lambda ^2 v_- e^{\lambda  \xi }\right)^2}\ e^{i(k_1\ \frac{x^{\alpha}}{\alpha}+k_2\ \frac{t^{\beta}}{\beta}+c_0)}, \nonumber \\
 N_2(x, t)=& 36 \lambda ^4 e^{\lambda  \xi } \left[12 \beta_1     \lambda ^2 u_-^2 e^{\lambda  \xi }-\beta_1     \gamma_1      u_-^2 v_-+\gamma_1     ^2 v_-^3-12 \gamma_1      \lambda ^2 v_-^2 e^{\lambda  \xi} \right. \nonumber  \\
 & \left. +36 \lambda ^4 v_- e^{2 \lambda  \xi }\right] \text{/} \left[36 \lambda ^4 e^{2 \lambda  \xi }-\beta_1     \gamma_1      u_-^2+\gamma_1     ^2 v_-^2-12 \gamma_1      \lambda ^2 v_- e^{\lambda  \xi
   }\right]^2,
 \end{align}
where  $ \xi=k_3 \ \frac{x^{\alpha}}{\alpha}+c\ \frac{t^{\beta}}{\beta}+\xi_0.$ This solution exist provided 
 \begin{align}\label{SBSs2a}
& \delta_3=-\frac{2 \delta _2 \mu _3}{\mu _2}, \ \  k_2=-\frac{\delta _2 \delta _1^2+\delta _1 k_1 \left(4 \delta _2^2+\mu _2\right)+\delta _2 \left(k_1^2 \left(4 \delta _2^2+\mu _2\right)+\mu _1\right)}{\mu _2},\ \  \text{and} \nonumber \\ 
& -\frac{\delta _1^2+4 \delta _2 \delta _1 k_1+4 \delta _2^2 k_1^2+\mu _1}{\mu _2\ k_3^2}>0,
 \end{align}
which are derived using the integrability  condition $\alpha_1   =2\gamma_1    , \ \  \lambda_1=\lambda_2=\lambda$ and demand that $\lambda$ is real.

The boundedness of solution (\ref{SBSs2})-(\ref{SBSs2a}) can be easily ensured by theorem \ref{th2} with the conditions (\ref{reimp}). Using one among those conditions solution (\ref{SBSs2})-(\ref{SBSs2a}) is plotted in figure \ref{fig3}. 
 Fig. \ref{fig3} shows that the absolute value of solution $E_2$ have M-shaped soliton profile, whereas real and imaginary parts of $E_2$  of the wave solution constitute the Akhmediev breather (AB) wave, the wave is not the space-periodic breather but the
time-periodic breather and component $N_2$ propagate with keeping M-shaped form soliton wave.
\begin{figure}[h]
\captionsetup[subfigure]{labelformat=empty}
\centering
\begin{subfigure}[b]{.35\linewidth}
    \centering
    \includegraphics[width=1\textwidth]{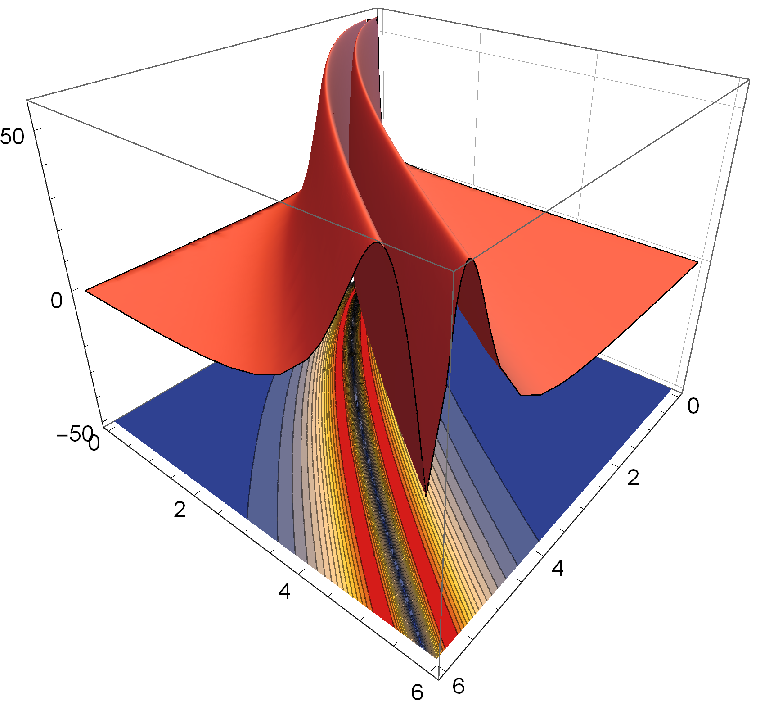}
    \caption{Abs($E_2$)}\label{fig:1b}
  \end{subfigure}%
   \hspace*{1.9em}   
  \begin{subfigure}[b]{.35\linewidth}
    \centering
    \includegraphics[width=1\textwidth]{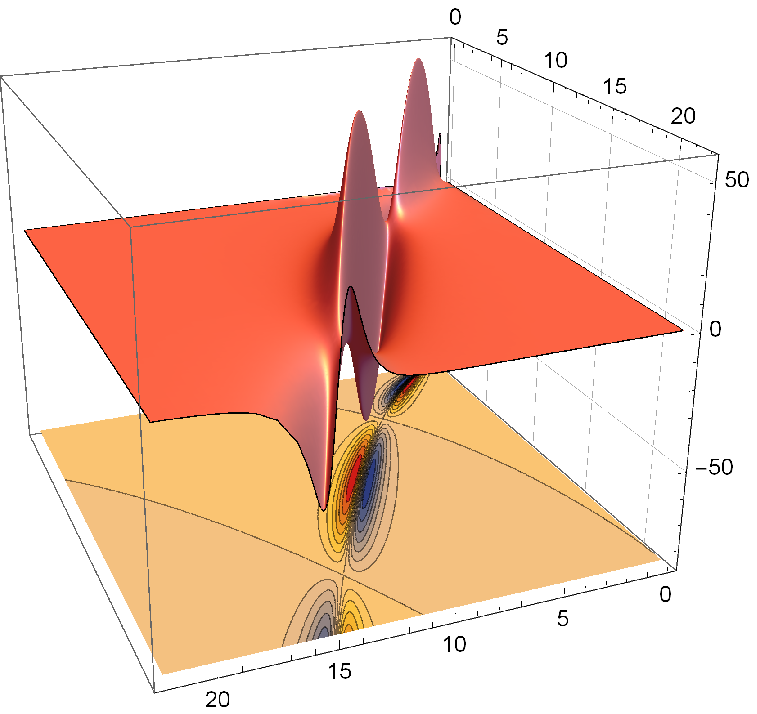}
    \caption{Re($E_2$)}\label{fig:1b}
  \end{subfigure} \\
   \vspace*{1em}  
  \begin{subfigure}[b]{.35\linewidth}
    \centering
    \includegraphics[width=1\textwidth]{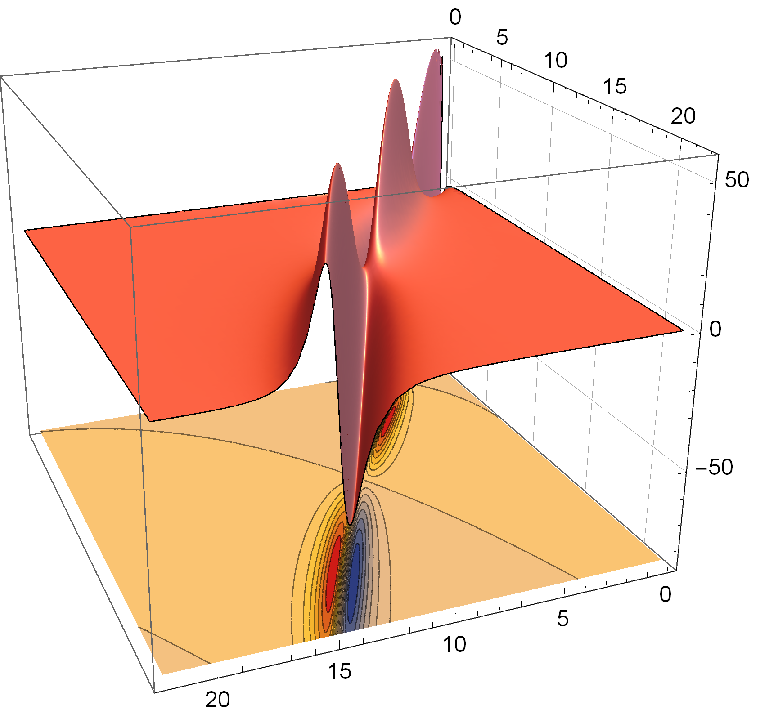}
    \caption{Im($E_2$)}\label{fig:1d}
  \end{subfigure}%
   \hspace*{1.9em}     
  \begin{subfigure}[b]{.35\linewidth}
    \centering
    \includegraphics[width=1\textwidth]{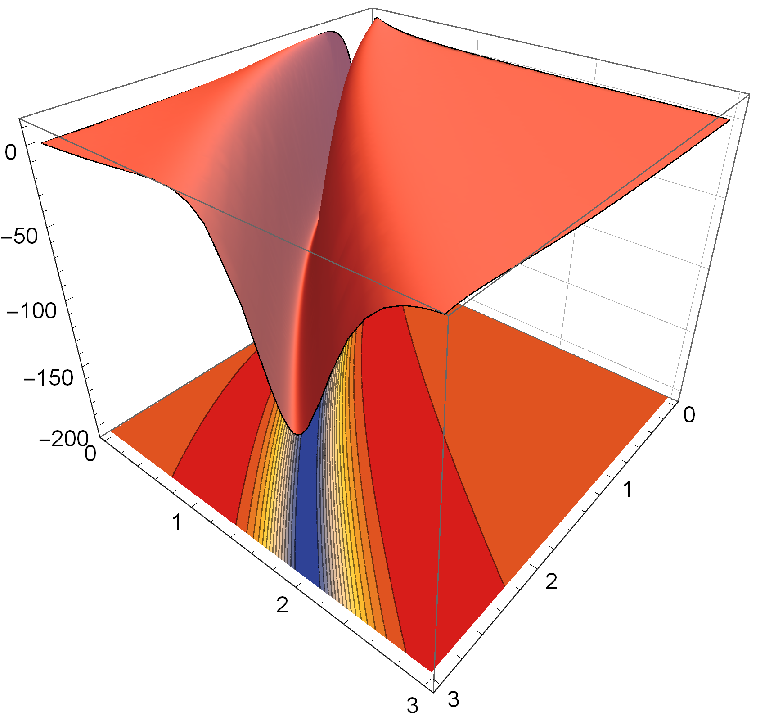}
    \caption{$N_2$}\label{fig:1n}
  \end{subfigure}  
  \caption{Plot of the solution (\ref{SBSs2})-(\ref{SBSs2a}) for  values of parameters values $k_1= 0.4, \ k_3= 1.4, \ \delta_1= 0.5,\ \delta_2= 0.6,\ \mu_1= 0.2,\ \mu_2= -1,\ \mu_3=0.4,\ \mu_4= 0.5,\ u_-= 0.4,\ v_-=0.99, \ \alpha=.7,\ \beta=.5,\ c_0=1, \ \xi_0=-4.$ }\label{fig4}
\end{figure}
In the previous cases, we have derived exact solutions of SBS by eliminating one or more free parameters using  compatibility conditions of the Painlev$\acute{\text{e}}$  test. Then it is of natural curiosity to explore whether it is possible to obtain an exact solution of the equations without using compatibility conditions (nonintegrable case) containing all parameters involved in the equation? To answer that here we consider a nonintegrable case $\lambda_1=\lambda_2.$ To solve the considered system of equations rather eliminating any other additional parameters, we eliminate one integration constant.
\subsection{Case-III \ $v_-=\frac{\sqrt{\beta_1    } u_-}{\sqrt{\alpha_1    -\gamma_1     }}, \ \  \lambda_1=\lambda_2=\lambda$ (say)}
In this case, RCAM gives the following correction terms
 \begin{align*}
& \begin{cases}
  u_0(\xi )=u_-\ e^{-\lambda \ \xi },\\
  v_0(\xi )= \frac{\sqrt{\beta_1    } \ u_- \ e^{-\lambda \ \xi }}{\sqrt{\alpha_1    -\gamma_1     }},
 \end{cases}   
 \end{align*}
   \begin{align*}
&  \begin{cases}
  u_1(\xi )= \frac{\alpha_1   \  \sqrt{\beta_1    }\ u_-^2 \ e^{-2 \lambda  \xi }}{3 \lambda^2\ \sqrt{\alpha_1    -\gamma_1     }},\\
  v_1(\xi )=\frac{\alpha_1   \  \beta_1    \ u_-^2\ e^{-2 \lambda \ \xi }}{3 \lambda^2 (\alpha_1    -\gamma_1     )},
 \end{cases}  
  \end{align*}
   \begin{align*}
 & \begin{cases}
  u_2(\xi )=\frac{\alpha_1   ^2\ \beta_1    \ u_-^3\ e^{-3 \lambda \ \xi }}{12 \lambda^4 (\alpha_1    -\gamma_1     )},\\
  v_2(\xi )= \frac{\alpha_1   ^2\ \beta_1   ^{\frac{3}{2}}\ u_-^3\ e^{-3 \lambda\  \xi }}{12 \ \lambda^4 (\alpha_1    -\gamma_1     )^{\frac{3}{2}}},
 \end{cases}\\
&  \begin{cases}
  u_3(\xi )= \frac{\alpha_1   ^3 \ \beta_1   ^{\frac{3}{2}}\ u_-^4 e^{-4 \lambda \ \xi }}{54 \lambda^6 (\alpha_1    -\gamma_1     )^{\frac{3}{2}}},\\
  v_3(\xi )=\frac{\alpha_1   ^3\ \beta_1   ^2\ u_-^4\ e^{-4 \lambda \ \xi }}{54 \lambda^6 (\alpha_1    -\gamma_1     )^2},
 \end{cases}
  \end{align*}
   \begin{align*}
& \ \ \ \ \vdots \\
& \begin{cases}
  u_m(\xi )=\frac{ (m+1)\ u_- \ e^{-\lambda  \xi }}{6^{m}} \left[ \frac{\alpha_1     \sqrt{\beta_1    } u_- e^{-\lambda  \xi }}{\lambda^2 \sqrt{\alpha_1    -\gamma_1     }}\right]^m, \\
  v_m(\xi )=\frac{\lambda^2 \ (m+1) }{6^{m}\ \alpha_1    } \left[ \frac{\alpha_1     \sqrt{\beta_1    } u_- e^{-\lambda  \xi }}{\lambda ^2 \sqrt{\alpha_1    -\gamma_1    }}\right]^{m+1},
 \end{cases}\\
&\ \ \ \ \vdots
 \end{align*}
where $u_-$ and $v_-$ are integration constants. Summing the above series terms one can derive the close form solution of (\ref{meq1}) in the form
 \begin{eqnarray}\label{sol3}
 \begin{cases}
 u(\xi)=\frac{36 u_- \lambda^4  (\alpha_1    -\gamma_1    )^2 e^{\lambda  \xi }}{\left(6 \lambda^2 (\alpha_1    -\gamma_1    ) e^{\lambda  \xi }-u_- \alpha_1     \sqrt{\beta_1    }  \sqrt{\alpha_1    -\gamma_1     }\right)^2},\\
 v(\xi)= \frac{36 u_- \lambda^4 \sqrt{\beta_1    }  (\alpha_1    -\gamma_1    )^{\frac{3}{2}} e^{\lambda  \xi }}{\left(6 \lambda^2 (\alpha_1    -\gamma_1    ) e^{\lambda  \xi }- u_- \alpha_1     \sqrt{\beta_1    } \sqrt{\alpha_1    -\gamma_1     }\right)^2}.
  \end{cases} 
 \end{eqnarray}
 To make the solution (\ref{sol3}) physically relevant below we present a theorem to derive its bounded cases.
 \begin{theorem}\label{th3} The solution (\ref{sol3}) will be bounded if parameters $\alpha_1   ,\  \beta_1   ,\  \gamma_1    $ involved in the equation and integration constant $u_-$ involve in solution satisfies any one of the following conditions
\begin{description}
\item{IIIa} \ $\alpha_1   >0,\  \beta_1   >0,\ \ u_-<0\ \& \ \alpha_1    > \gamma_1    $
\item{IIIb} \ $\alpha_1   >0,\  \beta_1   <0,\ \ u_-<0\ \& \ \alpha_1    < \gamma_1    $
\item{IIIc} \ $\alpha_1   <0,\  \beta_1   >0,\ \ u_->0 \ \& \ \alpha_1    > \gamma_1    $
\item{IIId} \ $\alpha_1   <0,\  \beta_1   <0,\ \ u_->0\ \& \ \alpha_1    < \gamma_1    $
\end{description}
\end{theorem}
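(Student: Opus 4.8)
The plan is to argue exactly as in the proofs of Theorems \ref{th1} and \ref{th2}: the two expressions in (\ref{sol3}) are rational in $e^{\lambda\xi}$, their numerators are smooth and decay, so the only way either component can blow up at a finite or infinite $\xi$ is for the common denominator to vanish somewhere on $\xi\in\mathbb{R}$. Hence it suffices to show that, under any one of the hypotheses IIIa--IIId, the denominator is bounded away from zero, and along the way that the expressions are in fact real.

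First I would put $Z=e^{\lambda\xi}$ (assuming $\lambda\neq0$, the non-degenerate case), so that $Z$ ranges over $(0,\infty)$, and record that both components of (\ref{sol3}) share the denominator
$$D(Z)=\bigl(6\lambda^2(\alpha_1-\gamma_1)\,Z-u_-\alpha_1\sqrt{\beta_1}\,\sqrt{\alpha_1-\gamma_1}\bigr)^2 .$$
This is the square of an affine function of $Z$, so — even more elementary than the Descartes-rule argument used before — it has the single zero
$$Z_0=\frac{u_-\alpha_1\sqrt{\beta_1}}{6\lambda^2\sqrt{\alpha_1-\gamma_1}} ,$$
and $D$ never vanishes on $(0,\infty)$ precisely when $Z_0\le 0$ (equivalently, $Z_0$ is negative or the radicals render it non-real but the expression stays real, which we handle directly).

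Next I would run the four-case check. In IIIa and IIIc the hypotheses give $\beta_1>0$ and $\alpha_1-\gamma_1>0$ separately, so $\sqrt{\beta_1}/\sqrt{\alpha_1-\gamma_1}$ is a genuine positive real; in IIIb and IIId both $\beta_1<0$ and $\alpha_1-\gamma_1<0$, so the two imaginary square roots combine into the positive real quotient $\sqrt{\beta_1/(\alpha_1-\gamma_1)}$, which also shows that $v_-=\sqrt{\beta_1}\,u_-/\sqrt{\alpha_1-\gamma_1}$ and all coefficients in (\ref{sol3}) are real. Since $\lambda^2>0$, the sign of $Z_0$ is therefore the sign of $u_-\alpha_1$, and one checks that each of IIIa--IIId forces $u_-\alpha_1<0$; hence $Z_0<0$ in every case. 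Consequently $D(Z)>0$ for all $Z\ge 0$, with $D(0)>0$ and $D(Z)\to\infty$ as $Z\to\infty$, so $D$ is bounded below by a positive constant on $(0,\infty)$. The numerators of $u$ and $v$ are $O(Z)$ as $Z\to\infty$ and vanish as $Z\to 0^{+}$, while $D$ grows like $Z^2$; therefore $u(\xi)$ and $v(\xi)$ are continuous and uniformly bounded on $\mathbb{R}$, proving the theorem.

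The step I expect to demand the most care is the branch bookkeeping of the square roots in cases IIIb and IIId: one must verify that with $\beta_1<0$ and $\alpha_1-\gamma_1<0$ the formula for $Z_0$ legitimately reduces to $\frac{u_-\alpha_1}{6\lambda^2}\sqrt{\beta_1/(\alpha_1-\gamma_1)}$ with a positive radicand, and that the solution itself is real there; once that is pinned down, the sign analysis and the "no positive root, hence bounded" conclusion are routine.
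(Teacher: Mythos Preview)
Your proposal is correct and follows essentially the same route as the paper's proof: both identify the common squared-affine denominator, compute its unique zero $Z_0=\dfrac{\alpha_1 u_-\sqrt{\beta_1}}{6\lambda^2\sqrt{\alpha_1-\gamma_1}}$ (equivalently the paper's $\xi=\tfrac{1}{\lambda}\log Z_0$), and then reduce boundedness to the pair of sign conditions $\beta_1(\alpha_1-\gamma_1)>0$ and $\alpha_1 u_-<0$, which split into IIIa--IIId. Your write-up is in fact slightly more careful than the paper's, since you explicitly track the square-root branches in cases IIIb/IIId to confirm the solution stays real and you check the $Z\to 0^+$ and $Z\to\infty$ asymptotics rather than only the finite-$\xi$ singularities.
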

\begin{proof} The components $u(\xi)$ and $v(\xi)$ of the solution (\ref{sol3}) have a common denominator given by the quartic polynomial
$$\left(6 \lambda^2 (\alpha_1    -\gamma_1    ) e^{\lambda  \xi }- u_- \alpha_1     \sqrt{\beta_1    } \sqrt{\alpha_1    -\gamma_1     }\right)^2. $$
The  repeated roots of the polynomial given by
$$\xi=\frac{1}{\lambda} log\left[  \frac{\alpha_1    \ u_- \sqrt{\beta_1   }}{6  \lambda^2 \sqrt{\alpha_1   -\gamma_1    }} \right].$$
The positive real values of these roots make denominator  zero that leads to an unbounded solution. So the requirement of boundedness of solution suggests that the argument of the $log$ has to be complex or negative. The complex case is not admissible  here because it makes solutions complex, so the real bounded solutions given  by the conditions among parameters 
$$\beta_1    ( \alpha_1    - \gamma_1     )>0 \, \, \text{and} \, \, \alpha_1   \ u_- <0 .$$
These restrictions can be split to derive the conditions of the theorem.
\end{proof}
Few particular values of the free parameters which  satisfy the conditions of the theorem presented  in table \ref{table2} and using them solution have been  plotted  in figure \ref{fig3}. The 2D plots ensure that solution always has a one-hump soliton-like profile.\\
\textbf{Solution-III }\\
So the solution of (\ref{SBs}) can be obtained from (\ref{sol3}) with (\ref{SBsTr}) and (\ref{reimp}) in the form
 \begin{align}\label{SBSs3}
 E_3(x, t)=& \frac{36 u_- \lambda^4  (\alpha_1    -\gamma_1    )^2 e^{\lambda  \xi }}{\left(6 \lambda^2 (\alpha_1    -\gamma_1    ) e^{\lambda  \xi }-u_- \alpha_1     \sqrt{\beta_1    }  \sqrt{\alpha_1    -\gamma_1     }\right)^2}\ e^{i(k_1\ \frac{x^{\alpha}}{\alpha}+k_2\ \frac{t^{\beta}}{\beta}+c_0)}, \nonumber \\
 N_3(x, t)=& \frac{36 u_- \lambda^4 \sqrt{\beta_1    }  (\alpha_1    -\gamma_1    )^{\frac{3}{2}} e^{\lambda  \xi }}{\left(6 \lambda^2 (\alpha_1    -\gamma_1    ) e^{\lambda  \xi }- u_- \alpha_1     \sqrt{\beta_1    } \sqrt{\alpha_1    -\gamma_1     }\right)^2}, \ \ \ \xi=k_3 \ \frac{x^{\alpha}}{\alpha}+c\ \frac{t^{\beta}}{\beta}+\xi_0,
 \end{align}
provided 
 \begin{align}\label{SBSs3a}
  &k_2=-\frac{\delta _2 \delta _1^2+\delta _1 k_1 \left(4 \delta _2^2+\mu _2\right)+\delta _2 \left(k_1^2 \left(4 \delta _2^2+\mu _2\right)+\mu _1\right)}{\mu _2}, \ \ \text{and} \nonumber \\
   & -\frac{\delta _1^2+4 \delta _2 \delta _1 k_1+4 \delta _2^2 k_1^2+\mu _1}{\mu _2\ k_3^2}>0,
 \end{align}
 which are given by condition $ \lambda_1=\lambda_2=\lambda$ and assertion that $\lambda$ is real.
One can easily derive the boundedness conditions of solution (\ref{SBSs3})-(\ref{SBSs3a}) by using theorem \ref{th3} with the conditions (\ref{reimp}). Using one among those conditions graphical representations of the solution (\ref{SBSs3})-(\ref{SBSs3a}) are presented in figure \ref{fig3}.  The figure shows that the absolute value of solution $E_3$ has always soliton profile, whereas real and imaginary parts of $E_3$  of the wave solution constitute a space-periodic breather  and component $N_3$ propagate with maintaining soliton shaped form. 
 \begin{figure}[h]
\captionsetup[subfigure]{labelformat=empty}
\centering
   \begin{subfigure}[b]{.35\linewidth} 
    \centering
    \includegraphics[width=1\textwidth]{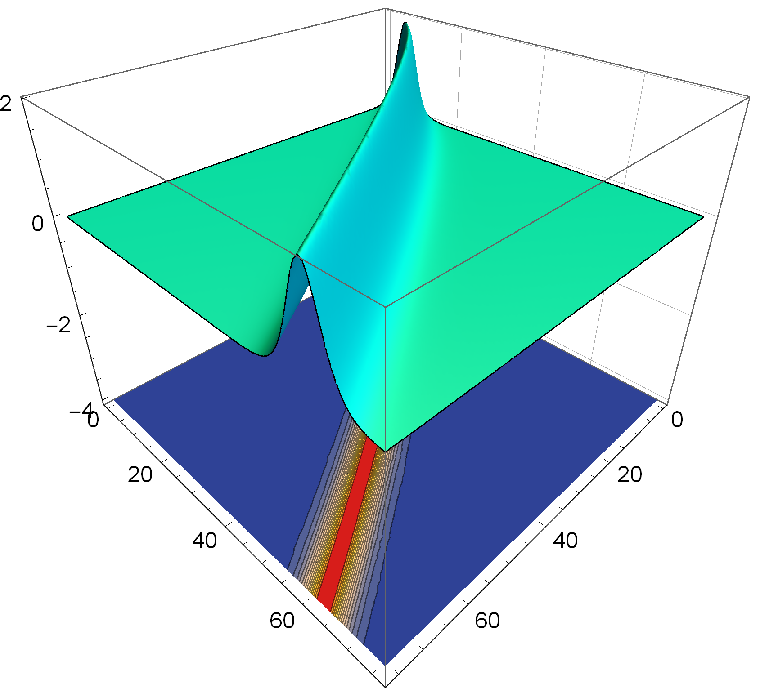}
    \caption{Abs($E_3$)}\label{fig:1b}
  \end{subfigure}%
     \hspace*{1.9em}   
  \begin{subfigure}[b]{.35\linewidth}
    \centering
    \includegraphics[width=1\textwidth]{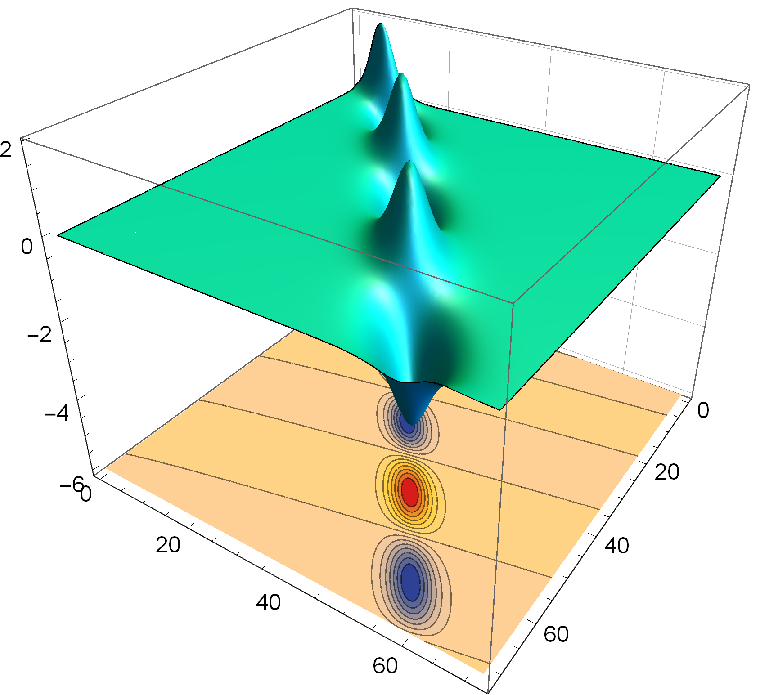}
    \caption{Re($E_3$)}\label{fig:1b}
  \end{subfigure}\\ 
     \vspace*{1em} 
  \begin{subfigure}[b]{.35\linewidth}
    \centering
    \includegraphics[width=1\textwidth]{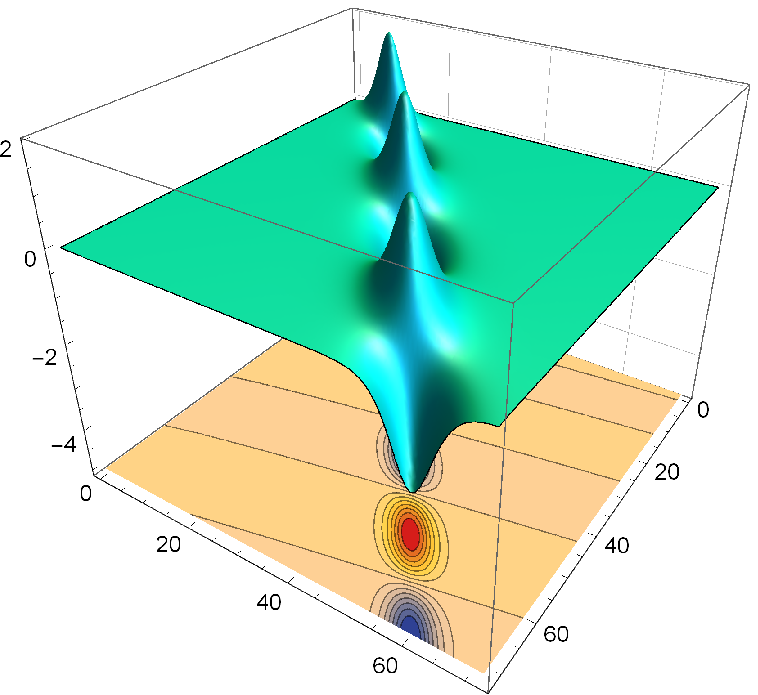}
    \caption{Im($E_3$)}\label{fig:1d}
  \end{subfigure}%
     \hspace*{1.9em}  
  \begin{subfigure}[b]{.35\linewidth}
    \centering
    \includegraphics[width=1\textwidth]{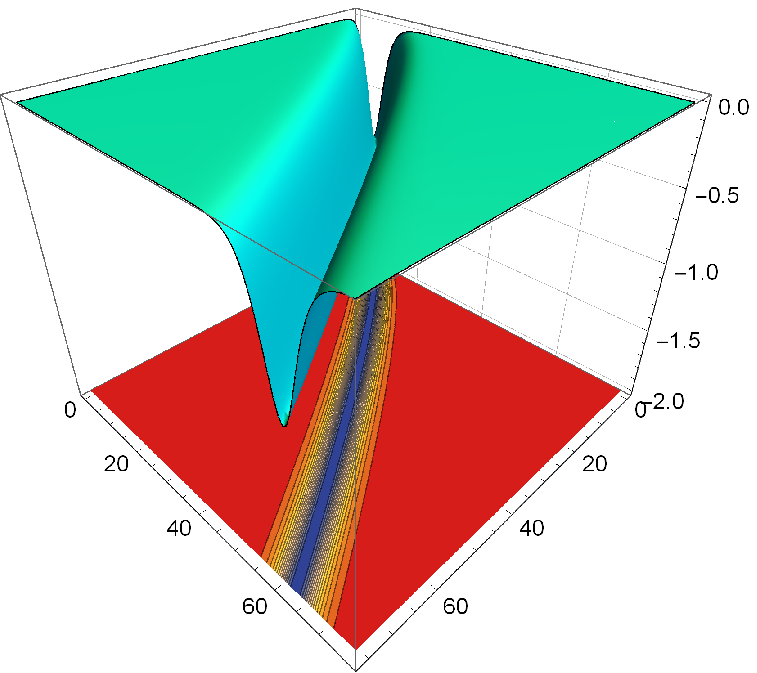}
    \caption{$N_3$}\label{fig:1n}
  \end{subfigure}%
  \caption{Plot of solution (\ref{SBSs3}) for  values of parameters values $k_1= 0.4, \ k_3= 0.4,\ \delta_1= -1,\ \delta_2= 0.6,\ \delta_3= 0.6,\ \mu_1= 0.2,\ \mu_2= -2,\ \mu_3=-0.4,\ \mu_4=- 0.5, \ \alpha=.8,\ \beta=.9,\ c_0=9, \ \xi_0=3,\ u_-= -37.4.$ }\label{fig5}
\end{figure}
\section{Conclusions}\label{sec6}
In this work,  a few new exact solutions of three integrable/nonintegrable cases of  SBS with space-time conformable have been derived. Here we have applied a complex travelling wave transformation and a modified RCAM to solve the SBS. Painlev$\acute{\text{e}}$ test is used to identify the integrable cases of the stationary form of SBS. In general Painlev$\acute{\text{e}}$ analysis cannot handle the case when considered equation contains parameter coefficients. In this case, the arbitrariness of parameters leads to the symbolic resonances and one can not execute the remaining steps of Painlev$\acute{\text{e}}$ test. But here we have proposed an alternative technique to handle this limitation and derived integrable cases. 
We have classified all the bounded physically relevant cases of the solutions and presented them in three theorems. General theories of algebra  have been utilized to prove the theorems. In addition to that, all bounded cases have been checked and used in plots to establish our claims. The presented 2D and 3D plots of solutions of SBS reflect the appearance of a few new two-hump, W-shaped, M-shaped of solution propagation state. We believe these findings are new, not available in the literature for the considered equation.
The results can be helpful for analyzing the dynamics of nonlinear localized waves in the generalized coupled SBS  and other coupled systems.
\section*{Acknowledgements}
The author express his sincere thanks to all the faculty members of department of mathematics, T.D.B. College Raniganj, for their valuable comments.

\bibliographystyle{spmpsci}      
\bibliography{SBS}
\end{document}